%
%
%
%
%
%
%
\documentclass[%
 reprint,
 amsmath,amssymb,
 aps,
]{revtex4-2}

\usepackage{graphicx}
\usepackage{dcolumn}
\usepackage{bm}
\usepackage{amsthm}



\newtheorem{lemma}{Lemma}

\newtheorem{theorem}{Theorem}

\newtheorem{corollary}{Corollary}

\newcommand{\bra}[1]{\langle {#1} |}
\newcommand{\ket}[1]{| {#1} \rangle}
\newcommand{\braket}[2]{\langle {#1} |{#2} \rangle}
\newcommand{\ketbra}[1]{| {#1} \rangle\langle {#1} |}

\newcommand{\lpnorm}[2]{\left\|#2\right\|_{#1}}
\newcommand{\trdist}[1]{\left\|#1\right\|_{\text{tr}}}
\newcommand{\fidelity}[2]{F\left(#1,#2\right)}
\newcommand{\tr}[1]{\text{tr}\left[#1\right]}
\newcommand{\ptr}[2]{\text{tr}_{#1}\left[#2\right]}
\newcommand{\diamondnorm}[1]{\left\|#1\right\|_{\diamond}}

\newcommand{\linop}[1]{\mathbf{L}\left(#1\right)}
\newcommand{\pos}[1]{\mathbf{Pos}\left(#1\right)}
\newcommand{\unitary}[1]{\mathbf{U}\left(#1\right)}
\newcommand{\dop}[1]{\mathbf{S}\left(#1\right)}
\newcommand{\puredop}[1]{\mathbf{P}\left(#1\right)}

\newcommand{\hh}{\mathcal{H}}
\newcommand{\vv}{\mathcal{V}}
\newcommand{\cd}{\mathbb{C}^d}
\newcommand{\cdim}[1]{\mathbb{C}^{#1}}
\newcommand{\rdim}[1]{\mathbb{R}^{#1}}
\newcommand{\vspan}[1]{\text{span}\left(#1\right)}
\newcommand{\nn}{\mathbb{N}}

\newcommand{\phiI}[1]{\phi_{#1}}
\newcommand{\psiI}[1]{\psi_{#1}}

\newcommand{\indicator}[1]{\text{I}\left[#1\right]}
\newcommand{\extpoint}[1]{\text{ext}\left(#1\right)}

\newcommand{\idop}{\mathbb{I}}

\newcommand{\eball}[2]{B_{#1}\left(#2\right)}

\begin{document}

\preprint{APS/123-QED}

\title{Quadratic improvement on accuracy of approximating pure quantum states and unitary gates by probabilistic implementation}

\author{Seiseki Akibue}
 \email{seiseki.akibue.rb@hco.ntt.co.jp}
\author{Go Kato}%
 \email{go.kato.gm@hco.ntt.co.jp}
\author{Seiichiro Tani}%
 \email{seiichiro.tani.cs@hco.ntt.co.jp}
\affiliation{%
NTT Communication Science Labs., NTT Corporation.\\
3--1, Morinosato-Wakamiya, Atsugi, Kanagawa 243-0198, Japan
}%

\date{\today}

\begin{abstract}
Pure quantum states are often approximately encoded as classical bit strings such as those representing probability amplitudes and those describing circuits that generate the quantum states.
The crucial quantity is the minimum length of classical bit strings from which the original pure states are approximately reconstructible. 
We derive asymptotically tight bounds on the minimum bit length required for probabilistic encodings with which one can approximately reconstruct the original pure state as an ensemble of the quantum states encoded in classical strings.
We also show that such a probabilistic encoding asymptotically halves the bit length required for ``deterministic" ones.
This is based on the fact that the accuracy of approximating pure states by using a given subset of pure states can be increased quadratically if we use ensembles of pure states in the subset. 
Moreover, we show that a similar fact holds when we consider the approximation of unitary gates by using a given subset of unitary gates. 
This improves the reduction rate of the circuit size by using probabilistic circuit synthesis compared to previous results.
This also demonstrates that the reduction is possible even for low-accuracy circuit synthesis, which might improve the accuracy of various NISQ algorithms.
\end{abstract}

\keywords{discrete geometry of the quantum state, high dimensional quantum states, numerical simulation}
\maketitle


\section{Introduction}
Pure quantum states are often approximately encoded in classical states in various quantum information processing tasks, such as classical bit strings storing probability amplitudes of pure states in the classical simulation of quantum circuits; classical data obtained by measurement in state tomography or state estimation; classical descriptions of quantum circuits that generate target pure states in quantum circuit synthesis \cite{VSI06, FWNRJ21}.
Recently, compact classical encodings from which one can predict probability distributions on outcomes obtained by measurements in certain classes have been developed \cite{HKP20,R99,GS19}.

The key issue is the minimum encoding. Here, we investigate it in terms of the minimum length $n$ of classical bit strings, with which one can approximately achieve any information processing task achievable with the original pure state on a $d$-dimensional system. That is, from the classical strings, one can construct a quantum state $\hat{\rho}$ that is indistinguishable from the original state $\phi$ within a certain accuracy by any measurements. In addition to {\it deterministic encodings}, which deterministically associate $\phi$ with a classical bit string, we consider {\it probabilistic encodings}, which associate $\phi$ with one of multiple classical strings according to some distribution. That is, in probabilistic encodings, $\phi$ is associated with an ensemble of classical strings from which $\hat{\rho}$ is constructed as an ensemble of the quantum states encoded in the classical strings.

Besides providing fundamental limits for information processing tasks using classical encodings, the minimum length $n$ is a fundamental quantity in various theoretical subjects including communication complexity, computational complexity, and asymptotic geometric analysis. Indeed, in deterministic encodings, classical strings do nothing but encode elements in an $\epsilon$-{\it covering} (sometimes called an $\epsilon$-{\it net}) of the set of pure states. Thus, the minimum bit length $n$ is the logarithm of the minimum size of $\epsilon$-coverings (called the {\it covering number}). Due to its prominent role in algorithm design and asymptotic geometric analysis, the covering number has been well-studied \cite{HLW06, GT13, ABmeetBanach}, and it is known that $n=O(d)$ bits are enough to encode an $\epsilon$-covering.
On the other hand, a particular task in communication complexity called {\it distributed quantum sampling}, which aims to classically transmit a pure state so as to approximately sample outcomes of arbitrary quantum measurement, provides a lower bound on the minimum length required for probabilistic encodings as $n=\Omega(d)$ \cite{M19}. Taking the two known facts into account, it seems that the minimum probabilistic encoding can be realized by a deterministic one.
This intuition is supported by the fact that an ensemble of deterministic classical states (called a {\it probabilistic classical state}) represents our lack of knowledge about a classical system while a pure state represents our maximum knowledge about a quantum system.

Contrary to this intuition, we show that the minimum length $n$ required for probabilistic encodings is exactly half of the one required for deterministic encodings in the asymptotic limits of the dimension or accuracy. Thus, the minimum encoding must associate some pure states with ensembles of classical strings describing distinct quantum states, which may be counter-intuitive, considering that pure states themselves are not probabilistic mixtures of distinct quantum states. The excessive bits required for deterministic encodings can be interpreted as a consequence of their excessive predictive capability such that they can not only reconstruct quantum states within a certain accuracy but also {\it deterministically} compute the probability distribution of any measurements within the same accuracy. Such a deterministic computation is impossible by using either the minimum (probabilistic) encoding or the original quantum states.

The bit length reduction by using probabilistic encodings follows from our refined estimation of the covering number and the following fact we prove: for any finite set $\mathcal{A}$ of pure states, it is possible to quadratically increase the accuracy of approximating arbitrary pure states by using {\it ensembles} of pure states in $\mathcal{A}$.
Moreover, we show that a similar fact holds when we consider the approximation of unitary gates by using a finite set of unitary gates. Recently, it has been found that when we approximately implement arbitrary unitary gates by using a gate sequence over a finite universal gate set (called {\it circuit synthesis}), the length of the gate sequence or the number of $T$ gates can be reduced by using ensembles of gate sequences for high-accuracy circuit synthesis \cite{C17, H17}. Our result improves the reduction rate of the previous results and shows that the reduction is possible even for low-accuracy circuit synthesis, which might improve the accuracy of NISQ algorithms.

\section{Preliminaries}
\label{sec:preliminaries}
In this section, we summarize basic notations used throughout the paper. Note that we consider only finite-dimensional Hilbert spaces. In particular, two-dimensional Hilbert space $\cdim{2}$ is called a qubit.
$\linop{\hh}$ and $\pos{\hh}$ represent the set of linear operators and positive semidefinite operators on Hilbert space $\hh$, respectively. $\dop{\hh}:=\left\{\rho\in\pos{\hh}:\tr{\rho}=1\right\}$ and $\puredop{\hh}:=\left\{\rho\in\dop{\hh}:\tr{\rho^2}=1\right\}$ represent the set of quantum states and that of pure states, respectively. Pure state $\phi\in\puredop{\hh}$ is sometimes alternatively represented by complex unit vector $\ket{\phi}\in\hh$ satisfying $\phi=\ketbra{\phi}$. Any physical transformation of the quantum state can be represented by a completely positive and trace preserving (CPTP) linear mapping $\Gamma:\linop{\hh}\rightarrow\linop{\hh'}$.

The trace distance $\trdist{\rho-\sigma}$ of two quantum states $\rho,\sigma\in\dop{\hh}$ is defined as $\trdist{M}:=\frac{1}{2}\tr{\sqrt{MM^\dag}}$ for $M\in\linop{\hh}$. It represents the maximum total variation distance between probability distributions obtained by measurements performed on two quantum states.
A similar notion measuring the distinguishability of $\rho$ and $\sigma$ is the fidelity function, defined by $\fidelity{\rho}{\sigma}:=\max\tr{\Phi^\rho\Phi^\sigma}$, where $\Phi^\rho\in\puredop{\hh\otimes\hh'}$ is a purification of $\rho$, i.e., $\rho=\ptr{\hh'}{\Phi^\rho}$, and the maximization is taken over all the purifications. Fuchs-van de Graaf inequalities \cite{FG99} provide relationships between the two measures with respect to the distinguishability as follows:
\begin{equation}
\label{ineq:FG}
 1-\sqrt{\fidelity{\rho}{\sigma}}\leq\trdist{\rho-\sigma}\leq\sqrt{1-\fidelity{\rho}{\sigma}}
\end{equation}
holds for any states $\rho,\sigma\in\dop{\hh}$, where the equality of the right inequality holds when $\rho$ and $\sigma$ are pure.

\section{Classical encoding of pure states}

\begin{figure}[h]
\includegraphics[height=.045\textheight]{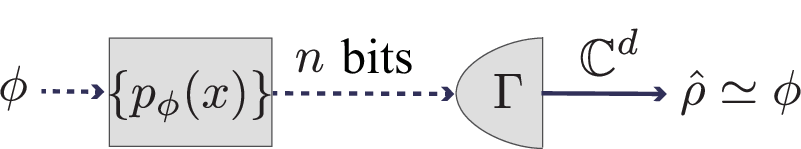}
\caption{\label{fig:setting} Probabilistic encoding of pure state $\phi$ on a $d$-dimensional system using $n$-bit strings and physical transformation $\Gamma$ corresponding to a decoder of the bit strings to quantum states. State $\phi$ is randomly encoded in label $x$ in finite set $X$ according to probability distribution $p_\phi:X\rightarrow[0,1]$, where $n=\lceil\log_2|X|\rceil$.}
\end{figure}

The existence of a probabilistic encoding is equivalent to the existence of physical transformation $\Gamma$ that can approximately reconstruct arbitrary pure state $\phi$ from probabilistic classical state $\{(p_\phi(x),x)\}_{x\in X}$ as shown in Fig.~\ref{fig:setting}. Physical transformation $\Gamma$ can be regarded as a {\it decoder} of the classical states to quantum states, which outputs mixed state $\rho_x\in\dop{\cd}$ when label $x\in X$ is inputted. Formally, $\Gamma$ is represented by a {\it classical-quantum channel} \cite{HSR03}, which is defined as $\Gamma(\sigma)=\sum_{x\in X}\bra{x}\sigma\ket{x}\rho_x$, where $\left\{\ket{x}\in\cdim{|X|}\right\}_{x\in X}$ is an orthonormal basis. We require that with the output state $\hat{\rho}:=\sum_x p_\phi(x)\rho_x$ of $\Gamma$, one can approximately sample any measurement outcomes performed on $\phi$ within total variation distance $\epsilon$, i.e., $ \trdist{\phi-\hat{\rho}}<\epsilon$ for given $\epsilon\in(0,1]$. Thus, we say that a probabilistic encoding of $\puredop{\cd}$ with accuracy $\epsilon$ exists if and only if there exists set $\{\rho_x\in\dop{\cd}\}_{x\in X}$ of quantum states satisfying
\begin{equation}
\label{eq:reconstruction}
\max_{\phi\in\puredop{\cd}}\min_{p}\trdist{\phi-\sum_{x\in X}p(x)\rho_x}<\epsilon,
\end{equation}
where the minimum is taken over probability distribution $p$ over $X$, i.e, $\sum_{x\in X}p(x)=1$ and $p(x)\geq0$.
Note that since any mixed state is a probabilistic mixture of pure states and trace distance is convex, Eq.~\eqref{eq:reconstruction} also guarantees that an arbitrary mixed state is also approximately reconstructible within the same accuracy as pure states.

\subsection{Minimum deterministic encoding}
First, we consider deterministic encodings, which can be defined as particular probabilistic encodings. Concretely, every pure state $\phi$ is encoded into a single label $x_\phi\in X$, which implies the output state of $\Gamma$ is $\hat{\rho}=\rho_{x_\phi}$.
Thus, we say that a deterministic encoding of $\puredop{\cd}$ with accuracy $\epsilon$ exists if and only if there exists set $\{\rho_x\in\dop{\cd}\}_{x\in X}$ of quantum states satisfying
\begin{equation}
\label{eq:covering}
\max_{\phi\in\puredop{\cd}}\min_{x\in X}\trdist{\phi-\rho_x}<\epsilon,
\end{equation}
which is called an {\it external} $\epsilon$-covering of $\puredop{\cd}$. A set of pure states $\{\rho_x\in\puredop{\cd}\}_{x\in X}$ satisfying Eq.~\eqref{eq:covering} is called an {\it internal} $\epsilon$-covering of $\puredop{\cd}$, which corresponds to particular deterministic encodings such as one storing probability amplitudes approximately representing $\phi$. The minimum size of internal (or external) $\epsilon$-coverings is called the internal (or external) covering number and denoted by $I_{in}$ (or $I_{ex}$.) Note that $I_{ex}\leq I_{in}$ by definition and the minimum bit length $n$ required for deterministic encodings equals to $\lceil\log_2 I_{ex}\rceil$.

Since the condition of the $\epsilon$-coverings in Eq.~\eqref{eq:covering} is equivalent to that for the set of $\epsilon$-balls $\left\{\eball{\epsilon}{\rho_x}:=\left\{\psi\in\puredop{\cd}:\trdist{\psi-\rho_x}<\epsilon\right\}\right\}_x$ to cover $\puredop{\cd}$, a detailed analysis of the volume of the $\epsilon$-ball provides a good estimation of the covering numbers. As shown in Appendix \ref{appendix:incovering}, the volume can be calculated as $\mu(\eball{\epsilon}{\phi})=\epsilon^{2(d-1)}$ with respect to the unitarily invariant probability measure $\mu$ for any $\phi\in\puredop{\cd}$. This directly provides a lower bound on $I_{in}$ and also its upper bound by applying the method of constructing an internal $\epsilon$-covering developed in \cite{ABmeetBanach}.
We obtain the following estimation of $I_{in}$, which is tighter than previous estimations \cite{HLW06, GT13} in large dimensions. For completeness, we provide a construction of an internal $\epsilon$-covering and an estimation of the parameters appearing in the construction in Appendix \ref{appendix:construction}.
\begin{lemma}
\label{lemma:internalcovering}
For any $\epsilon\in(0,1]$ and a positive integer $d\in\nn$ specified below, the internal covering number $I_{in}$ of internal $\epsilon$-coverings of $\puredop{\cd}$ is bounded as follows:  For any $r>2$, there exists $d_0\in\nn$ such that
\begin{equation}
\left(\frac{1}{\epsilon}\right)^{2(d-1)}\leq I_{in}\leq rd(\ln d)\left(\frac{1}{\epsilon}\right)^{2(d-1)},
\end{equation}
where the left inequality holds for any $d\in\nn$, and the right inequality holds for any $d\geq d_0$. 
For example, if $r=5$, we can set $d_0=2$.
\end{lemma}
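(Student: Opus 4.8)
The plan is to treat the two inequalities separately, using the exact ball measure $\mu(\eball{\epsilon}{\phi})=\epsilon^{2(d-1)}$ recorded above as the single geometric input. For the lower bound I would argue purely volumetrically: if $\{\rho_x\}_{x\in X}$ is an internal $\epsilon$-covering of size $I_{in}=|X|$, then by Eq.~\eqref{eq:covering} the open balls $\{\eball{\epsilon}{\rho_x}\}_{x\in X}$ cover $\puredop{\cd}$. Since $\mu$ is a probability measure supported on $\puredop{\cd}$, subadditivity gives $1=\mu\left(\puredop{\cd}\right)\leq\sum_{x\in X}\mu\left(\eball{\epsilon}{\rho_x}\right)=I_{in}\,\epsilon^{2(d-1)}$, where every center $\rho_x$ is itself pure so the ball measure is exactly $\epsilon^{2(d-1)}$. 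Rearranging yields $I_{in}\geq(1/\epsilon)^{2(d-1)}$ for every $\epsilon\in(0,1]$ and every $d\in\nn$, which is the left inequality.

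For the upper bound I would use the probabilistic (empirical-net) method underlying the construction of Ref.~\cite{ABmeetBanach}. Draw $m$ states $\phi_1,\dots,\phi_m$ independently from $\mu$; by unitary invariance, for any fixed target $\psi$ the probability that none of them lands in $\eball{\epsilon}{\psi}$ is $\left(1-\epsilon^{2(d-1)}\right)^m\leq\exp\left(-m\epsilon^{2(d-1)}\right)$. To upgrade this pointwise statement to a uniform covering I would fix a fine auxiliary $\delta$-net $\mathcal{N}$ of $\puredop{\cd}$ and demand that every point of $\mathcal{N}$ be covered with the slightly reduced radius $\epsilon-\delta$; the triangle inequality for the trace distance then guarantees that every $\psi\in\puredop{\cd}$ is within $\epsilon$ of some $\phi_i$. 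A union bound over $\mathcal{N}$ shows that the sample fails with probability at most $|\mathcal{N}|\exp\left(-m(\epsilon-\delta)^{2(d-1)}\right)$, so a covering of size $m$ exists as soon as $m>\ln|\mathcal{N}|\,/\,(\epsilon-\delta)^{2(d-1)}$. Optimizing $\delta$ and translating the resulting count into the overhead factor $rd\ln d$ is the remaining calculation, carried out in Appendix~\ref{appendix:construction}.

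The step I expect to be the main obstacle is obtaining the \emph{uniform} overhead $rd\ln d$, i.e.\ a bound that does not degrade as $\epsilon\to0$. A naive execution of the argument above is lossy on two fronts: keeping the ratio $\epsilon^{2(d-1)}/(\epsilon-\delta)^{2(d-1)}$ bounded forces $\delta=O(\epsilon/d)$, and the lower bound already implies $\ln|\mathcal{N}|\geq 2(d-1)\ln(1/\delta)$, so a global fine net injects a spurious $\ln(1/\epsilon)$ into the overhead and only yields $O\left(d(\ln d+\ln(1/\epsilon))\right)$. The essential idea is therefore to avoid discretizing the whole manifold at scale $\delta$ and instead exploit the homogeneous, locally Euclidean structure of $\puredop{\cd}$---each $\epsilon$-ball is, to leading order, a Euclidean ball in $\rdim{2(d-1)}$---so that the discretization overhead becomes scale invariant and depends on $d$ alone. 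Carrying this out along the lines of Ref.~\cite{ABmeetBanach} should collapse the $\ln(1/\epsilon)$ term and leave the clean factor $rd\ln d$; the final bookkeeping is to verify the threshold dimension $d_0$ for each $r>2$, and in particular to check that $r=5$ admits $d_0=2$.
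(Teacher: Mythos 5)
Your lower bound is correct and is exactly the paper's argument: subadditivity of the probability measure $\mu$ together with $\mu(\eball{\epsilon}{\phi})=\epsilon^{2(d-1)}$ for pure centers gives $1\leq I_{in}\,\epsilon^{2(d-1)}$, i.e.\ the left inequality for all $\epsilon\in(0,1]$ and $d\in\nn$.

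The upper bound, however, has a genuine gap, and you have diagnosed it yourself: the union-bound-over-an-auxiliary-$\delta$-net scheme necessarily pays $\ln|\mathcal{N}|=\Omega\left(d\ln(1/\delta)\right)$ with $\delta\lesssim\epsilon$, so it can only deliver an overhead $O\left(d(\ln d+\ln(1/\epsilon))\right)$, not the $\epsilon$-uniform factor $rd\ln d$ that the lemma claims. Your proposed repair --- exploiting the ``locally Euclidean structure'' to make the discretization scale invariant --- is a hope rather than an argument: you never exhibit the mechanism, and it is not what the cited construction actually does. The paper's Appendix \ref{appendix:construction} avoids auxiliary nets and union bounds entirely. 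It draws $J_R$ random centers, bounds the \emph{expected measure} of the uncovered region $A^c$ via Fubini as $\int d\mu^{J_R}\mu(A^c)=\left(1-\epsilon_R^D\right)^{J_R}\leq\exp\left(-J_R\epsilon_R^D\right)$ with $D=2(d-1)$, fixes a realization achieving this, and then \emph{patches} $A^c$ with a maximal family of disjoint $\epsilon_P$-balls contained in $A^c$: maximality makes the combined family an $(\epsilon_R+\epsilon_P)$-covering, while disjointness bounds the number of patch balls by $\exp\left(-J_R\epsilon_R^D\right)/\epsilon_P^D$ --- a volume argument in place of your union bound, which is precisely what eliminates the spurious $\ln(1/\epsilon)$. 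With the choices $\epsilon_P=\epsilon_R/x$, $\epsilon_R=\frac{x}{1+x}\epsilon$, $J_R=\left\lceil\frac{D}{\epsilon_R^D}\ln x\right\rceil$ and $x=D\ln D$, the total size satisfies $J\leq\frac{1}{\epsilon^D}\left\{\left(1+\frac{1}{x}\right)^D(D\ln x+1)+1\right\}$, whose ratio to $2d\ln d\cdot\epsilon^{-D}$ tends to $1$ as $d\to\infty$; this gives the claim for every $r>2$ with some $d_0$, and a direct numerical check yields $d_0=2$ for $r=5$. Without this (or an equivalent) second stage, your argument proves only the weaker bound with the extra $\ln(1/\epsilon)$ factor, so the stated lemma is not established.
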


To obtain a lower bound on $I_{ex}$, we use the following upper bounds on the volume of the $\epsilon$-ball as shown in Appendix \ref{appendix:excovering}: 
\begin{equation}
\label{eq:upperboundofextintersection}
\forall \epsilon\in\left(0,\frac{1}{2}\right], \mu(\eball{\epsilon}{\rho})\leq
 \left\{
 \begin{array}{l}
 (2\epsilon)^{2(d-1)}\ \ \text{for}\ 3\geq d\geq1\\
 \epsilon^{2(d-1)}\ \ \ \ \ \text{for}\ d\geq4.
\end{array}
\right.
\end{equation}
This bound and $\mu(\eball{\epsilon}{\phi})=\epsilon^{2(d-1)}$ imply that the volume of the $\epsilon$-ball can be maximized by setting its center as a pure state if $d\geq4$, which is contrary to what happens in a qubit ($d=2$), where $\eball{\epsilon}{\rho}$ corresponds to the intersection of the Bloch sphere and a ball centered at $\rho$ and the intersection is maximized not by a ball centered at a point {\it on} the Bloch sphere but by a ball centered at a point {\it inside} the Bloch ball. The qubit case also implies that the condition $d\geq4$ for the second inequality cannot be fully relaxed. $\mu(\eball{\epsilon}{\sigma})=1$ if $\epsilon>1-\frac{1}{d}$ with the maximally mixed state $\sigma=\frac{1}{d}\idop$ implies another condition $\epsilon\in\left(0,\frac{1}{2}\right]$ is also not fully removable. By using Eq.~\eqref{eq:upperboundofextintersection}, we easily obtain the following lower bound on $I_{ex}$.

\begin{lemma}
\label{lemma:externalcovering}
For any $\epsilon\in\left(0,\frac{1}{2}\right]$ and a positive integer $d\in\nn$ specified below, the external covering number $I_{ex}$ of external $\epsilon$-coverings of $\puredop{\cd}$ is bounded as follows:
\begin{equation}
 I_{ex}\geq
  \left\{
 \begin{array}{l}
\left(\frac{1}{2\epsilon}\right)^{2(d-1)}\ \ \text{for}\ 3\geq d\geq1 \\
\left(\frac{1}{\epsilon}\right)^{2(d-1)}\ \ \ \text{for}\ d\geq4.
\end{array}
\right.
\end{equation}
\end{lemma}

Using the two lemmas by setting $r=5$, we obtain the following theorem straightforwardly.
\begin{theorem}
\label{thm:definiteencoding}
For any $\epsilon\in\left(0,\frac{1}{2}\right]$ and an integer $d\geq2$ specified below,
the minimum size of label set $X$ used over all deterministic encodings of $\puredop{\cd}$ with accuracy $\epsilon$ is bounded by 
\begin{equation}
2\cdot l(d,2\epsilon)\leq\log_2 |X|\leq 2\cdot l(d,\epsilon)+\log_2(5d\ln d),
\end{equation}
where $l(d,\epsilon):=\left(d-1\right)\log_2\left(\frac{1}{\epsilon}\right)$. Moreover, if $d\geq4$, the lower bound can be strengthened as $2\cdot l(d,\epsilon)\leq\log_2 |X|$.
\end{theorem}

Using Theorem \ref{thm:definiteencoding} and $n=\lceil\log_2 |X|\rceil$, we obtain the asymptotic bit rate per dimension $\lim_{d\rightarrow \infty}\frac{n}{d}=2\log_2\left(\frac{1}{\epsilon}\right)$ of the minimum deterministic encoding for fixed $\epsilon\in\left(0,\frac{1}{2}\right]$. 
We can also obtain the asymptotic bit rate per accuracy $\lim_{\epsilon\rightarrow 0}\frac{n}{-\log_2\epsilon}=2(d-1)$ of the minimum deterministic encoding for fixed $d\geq2$.

\subsection{Minimum probabilistic encoding}
We prove the existence of a probabilistic encoding that achieves exactly half the asymptotic bit length required for the minimum deterministic encoding, and its optimality. The main tool for the proof is the following minimax relationship between the fidelity and the trace distance.
\begin{lemma}
\label{lemma:minimax}
 For any CPTP linear mapping $\Lambda:\linop{\hh'}\rightarrow\linop{\hh}$, it holds that
 \begin{eqnarray}
 \label{eq:minimax}
&&\max_{\phi\in\puredop{\hh}}\min_{\sigma\in\dop{\hh'}}\trdist{\phi-\Lambda(\sigma)}\nonumber\\
&&=1-\min_{\phi\in\puredop{\hh}}\max_{\psi\in\puredop{\hh'}} \fidelity{\Lambda(\psi)}{\phi}.
\end{eqnarray}
\end{lemma}
\begin{proof}
We use the minimax theorem as follows:
 \begin{eqnarray}
 &&(L.H.S.)\nonumber\\
 &&=\max_{\phi\in\puredop{\hh}}\min_{\sigma\in\dop{\hh'}}\max_{0\leq M\leq\idop}\tr{M(\phi-\Lambda(\sigma))}\nonumber\\
 &&=\max_{\phi\in\puredop{\hh}}\max_{0\leq M\leq\idop}\min_{\sigma\in\dop{\hh'}}\tr{M(\phi-\Lambda(\sigma))}\nonumber\\
 &&=\max_{0\leq M\leq\idop}\left(\max_{\phi\in\puredop{\hh}}\tr{M\phi}-\max_{\psi\in\puredop{\hh'}}\tr{M\Lambda(\psi)}\right)\nonumber\\
 &&=(R.H.S.)
\end{eqnarray}
Note that the minimax theorem, used in the second equation, is applicable since $f(\sigma,M):=\tr{M(\phi-\Lambda(\sigma))}$ is affine with respect to each variable and the domain of $M$ and $\sigma$ are compact and convex. The last equality holds since the maximum is achieved if $\text{rank}M=1$.
\end{proof}

As a special case of Lemma \ref{lemma:minimax}, we obtain the following lemma about the relationship of the accuracy of approximating pure states by a finite number of pure states and that by their ensembles:
\begin{corollary}
\label{corollary:statemixing}
 Let $\{\phi_x\in\puredop{\hh}\}_{x\in X}$ be a finite set of pure states. Then, it holds that
\begin{equation}
\label{eq:statemixing}
 \max_{\phi\in\puredop{\hh}}\min_{p}\trdist{\phi-\sum_{x\in X} p(x)\phi_x}= \max_{\phi\in\puredop{\hh}}\min_{x\in X}\trdist{\phi-\phi_x}^2,
\end{equation}
where the first minimum is taken over probability distribution $p$ over $X$.
\end{corollary}
\begin{proof}
Suppose $\Lambda$ in Lemma \ref{lemma:minimax} is a classical-quantum channel such that $\Lambda(\sigma)=\sum_{x\in X}\bra{x}\sigma\ket{x}\phi_x$, which corresponds to a particular decoder for a classical encoding that outputs pure state $\phi_x\in\puredop{\hh}$ when label $x$ is inputted. Then, L.H.S. of Eq.~\eqref{eq:minimax} is equal to that of Eq.~\eqref{eq:statemixing} by interchanging $\bra{x}\sigma\ket{x}$ and $p(x)$. and R.H.S. of  Eq.~\eqref{eq:minimax} is equal to
\begin{eqnarray}
 &&1-\min_{\phi\in\puredop{\hh}}\max_{p} \fidelity{\sum_{x\in X}p(x)\phi_x}{\phi}\nonumber\\
 &&=1-\min_{\phi\in\puredop{\hh}}\max_{p} \sum_{x\in X}p(x)\fidelity{\phi_x}{\phi}\nonumber\\
 &&=1-\min_{\phi\in\puredop{\hh}}\max_{x\in X} \fidelity{\phi_x}{\phi},
\end{eqnarray}
which is equal to R.H.S. of Eq.~\eqref{eq:statemixing} since the equality of the second inequlity in Eq.~\eqref{ineq:FG} holds when $\rho$ and $\sigma$ are pure.
\end{proof}
This corollary implies that an internal $\sqrt{\epsilon}$-covering is sufficient to approximate arbitrary pure state within accuracy $\epsilon$ by using its probabilistic mixture. This can be intuitively understood by the curvature of the sphere as illustrated in Fig.~\ref{fig:Bloch}. Indeed, Corollary \ref{corollary:statemixing} (for $\hh=\cdim{2}$) and the Bloch representation imply that for any compact and convex set $K$ whose extreme points $\extpoint{K}$ reside on sphere $S$ with radius $\frac{1}{2}$, $\delta=\sqrt{\epsilon}$ holds, where $\epsilon$ and $\delta$ are the distance between $K$ and the farthest point on $S$ from $K$ and that between $\extpoint{K}$ and the farthest point on $S$ from $\extpoint{K}$, respectively. This can also be derived from elementary geometric observations.

\begin{figure}[h]
\includegraphics[height=.18\textheight]{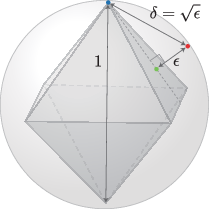}
\caption{\label{fig:Bloch} For any pure qubit state $\phi$, we can find probability mixture $\hat{\rho}$ of six pure states, which are the eigenstates of the Pauli operators and represented by the extreme points of the octahedron on the Bloch sphere, such that $\trdist{\phi-\hat{\rho}}\leq\epsilon=\frac{1}{2\sqrt{3}}\left(\sqrt{3}-1\right)$. If we represent the Bloch sphere by a sphere with radius $\frac{1}{2}$, $\epsilon$ is the longest Euclidean distance between a point on the Bloch sphere and the octahedron since the trace distance between two quantum states is equal to the Euclidean distance between the corresponding points in the Bloch ball. On the other hand, the longest trace distance $\delta$ between a pure state and the six pure states, which is equivalent to the longest Euclidean distance between a point on the Bloch sphere and the extreme points of the octahedron, satisfies $\delta=\sqrt{\epsilon}$.}
\end{figure}

By using Lemma \ref{lemma:minimax} and Corollary \ref{corollary:statemixing}, we can derive following asymptotically tight bounds on the minimum bit length required for probabilistic encodings:

\begin{theorem}
\label{thm:classicalencoding}
For any $\epsilon\in\left(0,1\right]$ and an integer $d\geq2$,
the minimum size of label set $X$ used over all probabilistic encodings of $\puredop{\cd}$ with accuracy $\epsilon$ is bounded by 
\begin{equation}
l(d,\epsilon)-\log_2d\leq\log_2|X|\leq l(d,\epsilon)+\log_2(5d\ln d),
\end{equation}
where $l(d,\epsilon):=\left(d-1\right)\log_2\left(\frac{1}{\epsilon}\right)$.
\end{theorem}
\begin{proof}
When $\{\phi_x\in\puredop{\cd}\}_{x\in X}$ is an internal $\sqrt{\epsilon}$-covering, Corollary \ref{corollary:statemixing} implies $\{\phi_x\}_{x\in X}$ satisfies Eq.~\eqref{eq:reconstruction}. Thus, there exists a probabilistic encoding of $\puredop{\cd}$ with accuracy $\epsilon$ and label set $X$, whose size is upper bounded by using Lemma \ref{lemma:internalcovering} with setting $r=5$.

Next, we show the lower bound. Let $\{\rho_x\in\dop{\cd}\}_{x\in X}$ satisfy Eq.~\eqref{eq:reconstruction}. By using Lemma \ref{lemma:minimax} and setting $\Lambda$ a classical-quantum channel such that $\Lambda(\sigma)=\sum_{x\in X}\bra{x}\sigma\ket{x}\rho_x$ as in the proof of Corollary \ref{corollary:statemixing}, we obtain
\begin{eqnarray}
1-\epsilon&<&\min_{\phi\in\puredop{\hh}}\max_{p}\sum_{x\in X}p(x) \fidelity{\rho_x}{\phi}\nonumber\\
&=&\min_{\phi\in\puredop{\hh}}\max_{x\in X} \fidelity{\rho_x}{\phi}.
\end{eqnarray}

By letting $\rho_x=\sum_{i=1}^{d}p(i|x)\phiI{i|x}$, we obtain that for any $\phi\in\puredop{\cd}$, there exists $i$ and $x$ such that
\begin{eqnarray}
 1-\epsilon&<&\fidelity{\rho_x}{\phi}=\sum_{j=1}^{d}p(j|x)\fidelity{\phiI{j|x}}{\phi}\nonumber\\
 &\leq&\fidelity{\phiI{i|x}}{\phi}=1-\trdist{\phi-\phiI{i|x}}^2.
\end{eqnarray}
Thus, $\{\phiI{i|x}\}_{i,x}$ is an internal $\sqrt{\epsilon}$-covering of $\puredop{\cd}$. Hence, the lower bound can be obtained by applying Lemma \ref{lemma:internalcovering} as $|X|d\geq\left(\frac{1}{\sqrt{\epsilon}}\right)^{2(d-1)}$.
\end{proof}

Using Theorem \ref{thm:classicalencoding} and $n=\lceil\log_2 |X|\rceil$, we obtain the asymptotic bit rate per dimension $\lim_{d\rightarrow \infty}\frac{n}{d}=\log_2\left(\frac{1}{\epsilon}\right)$ of the minimum probabilistic encoding for fixed $\epsilon\in\left(0,1\right]$, and one per accuracy $\lim_{\epsilon\rightarrow 0}\frac{n}{-\log_2\epsilon}=d-1$ of the minimum probabilistic encoding for fixed $d\geq2$, which are exactly half of those of the minimum deterministic encoding.

\section{Probabilistic circuit synthesis}
In the context of circuit synthesis using a finite universal gate set, it has recently been found that the length of the gate sequence or the number of $T$ gates can be reduced by using ensembles of gate sequences \cite{C17, H17}. This is based on the so-called {\it mixing lemma}, which shows that if finite set $\{\Upsilon_x\}_{x}$ of unitary transformations approximates arbitrary unitary transformations within sufficient high accuracy, it is possible to increase the accuracy by using {\it particular} ensembles $\sum_xp(x)\Upsilon_x$.

Based on Corollary \ref{corollary:statemixing}, where we derive the accuracy achieved by the {\it optimal} ensembles of pure states in approximating arbitrary pure states, we can derive bounds on the accuracy achieved by the {\it optimal} ensembles of unitary transformations in approximating arbitrary unitary transformations in the following theorem.
Note that similarly to \cite{C17, H17}, we measure the accuracy of the approximation by using the {\it diamond norm} (sometimes called the completely bounded trace norm) of hermitian preserving linear mappings, defined as
\begin{equation}
\frac{1}{2} \diamondnorm{\Xi}:=\max_{\Phi\in\puredop{\hh_1\otimes\hh_1'}}\trdist{\Xi\otimes id_{\hh_1'}(\Phi)},
\end{equation}
where $\Xi:\linop{\hh_1}\rightarrow\linop{\hh_2}$, $id_\hh$ is the identity mapping on $\linop{\hh}$ and $\dim\hh_1'=\dim\hh_1$.
\begin{theorem}
\label{thm:unitary}
For an integer $d\geq2$ specified below, let $\{\Upsilon_x\}_{x\in X}$ be a finite set of unitary transformations on $\linop{\cd}$. Then, it holds that
\begin{eqnarray}
\frac{2}{d}\alpha\leq \max_{\Upsilon}\min_{p}\frac{1}{2}\diamondnorm{\Upsilon-\sum_{x\in X}p(x)\Upsilon_x}\leq \alpha,\\
 \alpha= \max_{\Upsilon}\min_{x\in X}\left(\frac{1}{2}\diamondnorm{\Upsilon-\Upsilon_x}\right)^2,
 \end{eqnarray}
where the first minimum is taken over probability distribution $p$ over $X$. 
Note that if $d=2$, the equalities hold.
\end{theorem}
 This theorem resembles Corollary \ref{corollary:statemixing} (especially when $d=2$), which can be regarded as a consequence of the similarity between pure states and unitary transformations via the Choi-Jamio\l kowski representation. Although a proof uses the minimax theorem as in the proof of Corollary \ref{corollary:statemixing}, it requires additional work to some extent. We give a complete proof in Appendix \ref{appendix:unitarymixing} with the sharp lower bound.

This theorem implies that quantum circuit synthesis using probabilistically generated circuits formed from finite universal gate set $\mathcal{C}=\{g_1,g_2,\cdots\}$ can reduce the circuit size. 
To see this, first, let $\{\Upsilon_x^{(n)}\}_x$ be the set of unitary transformations representing the unitary circuit realized by gate sequences $g_{i_1}\cdots g_{i_n}$ of length at most $n$. Next, let $n(\epsilon)$ be the smallest length of gate sequences to approximate arbitrary unitary transformations within accuracy $\epsilon$, i.e., $n(\epsilon):=\min\{n\in\nn:\max_{\Upsilon}\min_{x}\frac{1}{2}\diamondnorm{\Upsilon-\Upsilon_x^{(n)}}<\epsilon\}$.
Theorem \ref{thm:unitary} implies that by using the probabilistic implementation, we can implement arbitrary unitary transformation within accuracy $\epsilon$ only with an $n(\sqrt{\epsilon})$-size circuit.

The accuracy in approximating unitary transformations is often measured by using the operator norm $\lpnorm{\infty}{X}:=\max_{\phi\in\puredop{\hh}}\lpnorm{2}{X\ket{\phi}}$. The celebrated Solovay-Kitaev theorem shows that for any finite universal gate set $\mathcal{C}$, set $\{U_x^{(n)}\}_x$ of unitary circuits realized by gate sequences of length $n=O\left(\log^c\left(\frac{1}{\delta}\right)\right)$ ($c\geq1$) is sufficient for approximating arbitrary unitary operators, i.e., $\max_U\min_x\lpnorm{\infty}{U-U_x^{(n)}}<\delta$, where we denote a unitary circuit and the unitary operator representing the circuit by the same symbol $U_x^{(n)}$. On the other hand, a relationship between the operator norm and the diamond norm shown in Appendix \ref{appendix:diamondnorm} implies that $\diamondnorm{\Upsilon-\Upsilon_x}<\delta\sqrt{4-\delta^2}$ if $\lpnorm{\infty}{U-U_x}<\delta$, where $\Upsilon(\rho)=U\rho U^\dag$ and $\Upsilon_x(\rho)=U_x\rho U_x^\dag$. Combining with Theorem \ref{thm:unitary}, we can verify that arbitrary unitary transformations can be approximated by ensembles of $\{\Upsilon_x\}_x$ such that
\begin{equation}
\max_{\Upsilon}\min_{p}\frac{1}{2}\diamondnorm{\Upsilon-\sum_{x}p(x)\Upsilon_x}<\delta^2\left(1-\left(\frac{\delta}{2}\right)^2\right)
\end{equation}
if $\max_U\min_x\lpnorm{\infty}{U-U_x}<\delta$. This bound is tighter than the previous bound obtained in \cite[Theorem~1]{C17}, $\max_{\Upsilon}\min_{p}\frac{1}{2}\diamondnorm{\Upsilon-\sum_{x}p(x)\Upsilon_x}<5\delta^2$. Moreover, our bound holds if $\delta\leq\sqrt{2}$ while the previous bound was shown to hold for $\delta<0.01$. In addition to the improved estimation, our lower bound reveals the limitation of the probabilistic implemetation.

As suggested by the Solovay-Kitaev theorem, if we assume $n(\epsilon)\sim a\log^c\left(\frac{1}{\epsilon}\right)$ in the high-accuracy regime ($\epsilon\ll 1$), the probabilistic implementation reduces the length of gate sequences by about $1-\left(\frac{1}{2}\right)^c\geq50\%$ since $c\geq1$ from a volume consideration. Even in the low-accuracy regime, our bound guarantees the reduction. For example, we show how the gate length can be reduced by using the probabilistic implementation to synthesize single-qubit unitary transformations with gate set $\{S,H,T\}$ in Appendix \ref{appendix:synthesis}.

\section{Conclusion}
In this paper, we have considered the minimum probabilistic encoding so as to approximately reconstruct an arbitrary pure state $\phi\in\puredop{\cd}$ from an $n$-bit string within accuracy $\epsilon$ with respect to the trace distance. We then demonstrated that it cannot be realized by simply storing an element of the minimum $\epsilon$-covering. More precisely, we proved that the bit rate required for probabilistic encodings is exactly half of that of the minimum length of bits necessary to store elements of an $\epsilon$-covering of $\puredop{\cd}$ in asymptotic limit $\epsilon\rightarrow0$ or in limit $d\rightarrow\infty$ when $\epsilon\in\left(0,\frac{1}{2}\right]$. In limit $d\rightarrow\infty$ when $\epsilon\in\left(\frac{1}{2},1\right]$, the same result holds if we consider only internal $\epsilon$-coverings; however, in general, whether the same result holds or not is an open problem. Several numerical calculations suggest a positive answer.

Moreover, we show that similarly to the state encoding, for any finite set $\{\Upsilon_x\}_x$ of unitary transformations, we can at least quadratically increase the accuracy of approximating arbitrary unitary transformations by using ensembles of $\{\Upsilon_x\}_x$. Particularly, we obtain bounds on the accuracy when one uses the optimal ensembles to reveal the possibility and the limitation of the probabilistic implementation of unitary transformations.

Our result could provide a new quantitative guiding principle to explore further capabilities and limitations of manipulating a quantum system as well as the foundations of quantum theory, including the following two related topics:

\begin{enumerate}
 \item The results demonstrate an information-theoretical separation of the memory size to store a pure quantum state between strong simulations and weak ones, which are two types of classical simulation of a quantum computer \cite{N10, BBCCGH19, HMW20} (the former approximately {\it computes} the probability distribution over the outcomes, whereas the latter only approximately {\it samples} the outcomes.) 
  
 \item The complex projective space representation of pure states can be regarded as a classical encoding of pure states in deterministic classical states describing operators in $\puredop{\hh}$. The fact that any distinct pure states are encoded indistinguishable classical states inclines us to think that the indistinguishability of non-orthogonal pure states results from our limited ability to measure them.
To interpret the indistinguishability as an intrinsic feature of pure states, classical encodings of pure states in probabilistic classical states have been constructed in $\psi$-{\it epistemic} models \cite{LJBT12, ABCL13}, in which the encodings use indistinguishable and probabilistic classical states to encode some distinct pure states.
Our results show that indistinguishable classical states encoding distinct elements in an $\epsilon$-covering are not only helpful for such an interpretation but also necessary for the minimum probabilistic encoding.
\end{enumerate}

\begin{acknowledgments}
We thank Yuki Takeuchi, Yasunari Suzuki, Yasuhiro Takahashi, and Adel Sohbi for helpful discussions.
This work was partially supported by JST Moonshot R\&D MILLENNIA Program (Grant No.JPMJMS2061).
SA was partially supported by JST, PRESTO Grant No.JPMJPR2111 and JPMXS0120319794.
GK was supported in part by the Grant-in-Aid for Scientific Research (C) No.20K03779, (C) No.21K03388, and (S) No.18H05237 of JSPS, CREST (Japan Science and Technology Agency) Grant No.JPMJCR1671.
ST was partially supported by the Grant-in-Aid for Transformative Research Areas No.JP20H05966 of JSPS.
\end{acknowledgments}

\if0
\subsubsection{Wide equations}
The equation that follows is set in a wide format, i.e., it spans the full page. 
The wide format is reserved for long equations
that cannot easily be set in a single column:
\begin{widetext}
\begin{equation}
{\cal R}^{(\text{d})}=
 g_{\sigma_2}^e
 \left(
   \frac{[\Gamma^Z(3,21)]_{\sigma_1}}{Q_{12}^2-M_W^2}
  +\frac{[\Gamma^Z(13,2)]_{\sigma_1}}{Q_{13}^2-M_W^2}
 \right)
 + x_WQ_e
 \left(
   \frac{[\Gamma^\gamma(3,21)]_{\sigma_1}}{Q_{12}^2-M_W^2}
  +\frac{[\Gamma^\gamma(13,2)]_{\sigma_1}}{Q_{13}^2-M_W^2}
 \right)\;. 
 \label{eq:wideeq}
\end{equation}
\end{widetext}
This is typed to show how the output appears in wide format.
(Incidentally, since there is no blank line between the \texttt{equation} environment above 
and the start of this paragraph, this paragraph is not indented.)
\fi

\onecolumngrid

\bibliography{references.bib}

\begin{thebibliography}{20}%
\makeatletter
\providecommand \@ifxundefined [1]{%
 \@ifx{#1\undefined}
}%
\providecommand \@ifnum [1]{%
 \ifnum #1\expandafter \@firstoftwo
 \else \expandafter \@secondoftwo
 \fi
}%
\providecommand \@ifx [1]{%
 \ifx #1\expandafter \@firstoftwo
 \else \expandafter \@secondoftwo
 \fi
}%
\providecommand \natexlab [1]{#1}%
\providecommand \enquote  [1]{``#1''}%
\providecommand \bibnamefont  [1]{#1}%
\providecommand \bibfnamefont [1]{#1}%
\providecommand \citenamefont [1]{#1}%
\providecommand \href@noop [0]{\@secondoftwo}%
\providecommand \href [0]{\begingroup \@sanitize@url \@href}%
\providecommand \@href[1]{\@@startlink{#1}\@@href}%
\providecommand \@@href[1]{\endgroup#1\@@endlink}%
\providecommand \@sanitize@url [0]{\catcode `\\12\catcode `\$12\catcode
  `\&12\catcode `\#12\catcode `\^12\catcode `\_12\catcode `\%12\relax}%
\providecommand \@@startlink[1]{}%
\providecommand \@@endlink[0]{}%
\providecommand \url  [0]{\begingroup\@sanitize@url \@url }%
\providecommand \@url [1]{\endgroup\@href {#1}{\urlprefix }}%
\providecommand \urlprefix  [0]{URL }%
\providecommand \Eprint [0]{\href }%
\providecommand \doibase [0]{https://doi.org/}%
\providecommand \selectlanguage [0]{\@gobble}%
\providecommand \bibinfo  [0]{\@secondoftwo}%
\providecommand \bibfield  [0]{\@secondoftwo}%
\providecommand \translation [1]{[#1]}%
\providecommand \BibitemOpen [0]{}%
\providecommand \bibitemStop [0]{}%
\providecommand \bibitemNoStop [0]{.\EOS\space}%
\providecommand \EOS [0]{\spacefactor3000\relax}%
\providecommand \BibitemShut  [1]{\csname bibitem#1\endcsname}%
\let\auto@bib@innerbib\@empty
\bibitem [{\citenamefont {Shende}\ \emph {et~al.}(2006)\citenamefont {Shende},
  \citenamefont {Bullock},\ and\ \citenamefont {Markov}}]{VSI06}%
  \BibitemOpen
  \bibfield  {author} {\bibinfo {author} {\bibfnamefont {V.}~\bibnamefont
  {Shende}}, \bibinfo {author} {\bibfnamefont {S.}~\bibnamefont {Bullock}},\
  and\ \bibinfo {author} {\bibfnamefont {I.}~\bibnamefont {Markov}},\ }\href
  {https://doi.org/10.1109/TCAD.2005.855930} {\bibfield  {journal} {\bibinfo
  {journal} {IEEE Transactions on Computer-Aided Design of Integrated Circuits
  and Systems}\ }\textbf {\bibinfo {volume} {25}},\ \bibinfo {pages} {1000}
  (\bibinfo {year} {2006})}\BibitemShut {NoStop}%
\bibitem [{\citenamefont {Brand\~ao}\ \emph {et~al.}(2021)\citenamefont
  {Brand\~ao}, \citenamefont {Chemissany}, \citenamefont {Hunter-Jones},
  \citenamefont {Kueng},\ and\ \citenamefont {Preskill}}]{FWNRJ21}%
  \BibitemOpen
  \bibfield  {author} {\bibinfo {author} {\bibfnamefont {F.~G.}\ \bibnamefont
  {Brand\~ao}}, \bibinfo {author} {\bibfnamefont {W.}~\bibnamefont
  {Chemissany}}, \bibinfo {author} {\bibfnamefont {N.}~\bibnamefont
  {Hunter-Jones}}, \bibinfo {author} {\bibfnamefont {R.}~\bibnamefont
  {Kueng}},\ and\ \bibinfo {author} {\bibfnamefont {J.}~\bibnamefont
  {Preskill}},\ }\href {https://doi.org/10.1103/PRXQuantum.2.030316} {\bibfield
   {journal} {\bibinfo  {journal} {PRX Quantum}\ }\textbf {\bibinfo {volume}
  {2}},\ \bibinfo {pages} {030316} (\bibinfo {year} {2021})}\BibitemShut
  {NoStop}%
\bibitem [{\citenamefont {Huang}\ \emph {et~al.}(2020)\citenamefont {Huang},
  \citenamefont {Kueng},\ and\ \citenamefont {Preskill}}]{HKP20}%
  \BibitemOpen
  \bibfield  {author} {\bibinfo {author} {\bibfnamefont {H.-Y.}\ \bibnamefont
  {Huang}}, \bibinfo {author} {\bibfnamefont {R.}~\bibnamefont {Kueng}},\ and\
  \bibinfo {author} {\bibfnamefont {J.}~\bibnamefont {Preskill}},\ }\href
  {https://doi.org/10.1038/s41567-020-0932-7} {\bibfield  {journal} {\bibinfo
  {journal} {Nature Physics}\ }\textbf {\bibinfo {volume} {16}},\ \bibinfo
  {pages} {1050} (\bibinfo {year} {2020})}\BibitemShut {NoStop}%
\bibitem [{\citenamefont {Raz}(1999)}]{R99}%
  \BibitemOpen
  \bibfield  {author} {\bibinfo {author} {\bibfnamefont {R.}~\bibnamefont
  {Raz}},\ }in\ \href {https://doi.org/10.1145/301250.301343} {\emph {\bibinfo
  {booktitle} {Proceedings of the 31st Annual ACM Symposium on Theory of
  Computing}}},\ \bibinfo {series and number} {STOC '99}\ (\bibinfo
  {publisher} {Association for Computing Machinery},\ \bibinfo {address} {New
  York, NY, USA},\ \bibinfo {year} {1999})\ pp.\ \bibinfo {pages}
  {358--367}\BibitemShut {NoStop}%
\bibitem [{\citenamefont {Gosset}\ and\ \citenamefont {Smolin}(2019)}]{GS19}%
  \BibitemOpen
  \bibfield  {author} {\bibinfo {author} {\bibfnamefont {D.}~\bibnamefont
  {Gosset}}\ and\ \bibinfo {author} {\bibfnamefont {J.}~\bibnamefont
  {Smolin}},\ }in\ \href@noop {} {\emph {\bibinfo {booktitle} {Proceedings of
  the 14th Conference on the Theory of Quantum Computation, Communication and
  Cryptography}}}\ (\bibinfo {year} {2019})\ pp.\ \bibinfo {pages}
  {8:1--8:9}\BibitemShut {NoStop}%
\bibitem [{\citenamefont {Hayden}\ \emph {et~al.}(2006)\citenamefont {Hayden},
  \citenamefont {Leung},\ and\ \citenamefont {Winter}}]{HLW06}%
  \BibitemOpen
  \bibfield  {author} {\bibinfo {author} {\bibfnamefont {P.}~\bibnamefont
  {Hayden}}, \bibinfo {author} {\bibfnamefont {D.~W.}\ \bibnamefont {Leung}},\
  and\ \bibinfo {author} {\bibfnamefont {A.}~\bibnamefont {Winter}},\ }\href
  {https://doi.org/10.1007/s00220-006-1535-6} {\bibfield  {journal} {\bibinfo
  {journal} {Communications in Mathematical Physics}\ }\textbf {\bibinfo
  {volume} {265}},\ \bibinfo {pages} {95} (\bibinfo {year} {2006})},\ \Eprint
  {https://arxiv.org/abs/0407049} {arXiv:0407049 [quant-ph]} \BibitemShut
  {NoStop}%
\bibitem [{\citenamefont {Gavinsky}\ and\ \citenamefont {Ito}(2013)}]{GT13}%
  \BibitemOpen
  \bibfield  {author} {\bibinfo {author} {\bibfnamefont {D.}~\bibnamefont
  {Gavinsky}}\ and\ \bibinfo {author} {\bibfnamefont {T.}~\bibnamefont {Ito}},\
  }\href@noop {} {\bibfield  {journal} {\bibinfo  {journal} {Quantum Info.
  Comput.}\ }\textbf {\bibinfo {volume} {13}},\ \bibinfo {pages} {583}
  (\bibinfo {year} {2013})}\BibitemShut {NoStop}%
\bibitem [{\citenamefont {Guralnick}\ and\ \citenamefont
  {Sudakov}(2017)}]{ABmeetBanach}%
  \BibitemOpen
  \bibfield  {author} {\bibinfo {author} {\bibfnamefont {R.}~\bibnamefont
  {Guralnick}}\ and\ \bibinfo {author} {\bibfnamefont {B.}~\bibnamefont
  {Sudakov}},\ }\href@noop {} {\emph {\bibinfo {title} {Alice and Bob Meet
  Banach}}}\ (\bibinfo  {publisher} {American Mathematical Society},\ \bibinfo
  {year} {2017})\BibitemShut {NoStop}%
\bibitem [{\citenamefont {Montanaro}(2019)}]{M19}%
  \BibitemOpen
  \bibfield  {author} {\bibinfo {author} {\bibfnamefont {A.}~\bibnamefont
  {Montanaro}},\ }\href {https://doi.org/10.22331/q-2019-06-28-154} {\bibfield
  {journal} {\bibinfo  {journal} {{Quantum}}\ }\textbf {\bibinfo {volume}
  {3}},\ \bibinfo {pages} {154} (\bibinfo {year} {2019})}\BibitemShut {NoStop}%
\bibitem [{\citenamefont {Campbell}(2017)}]{C17}%
  \BibitemOpen
  \bibfield  {author} {\bibinfo {author} {\bibfnamefont {E.}~\bibnamefont
  {Campbell}},\ }\href {https://doi.org/10.1103/PhysRevA.95.042306} {\bibfield
  {journal} {\bibinfo  {journal} {Phys. Rev. A}\ }\textbf {\bibinfo {volume}
  {95}},\ \bibinfo {pages} {042306} (\bibinfo {year} {2017})}\BibitemShut
  {NoStop}%
\bibitem [{\citenamefont {Hastings}(2017)}]{H17}%
  \BibitemOpen
  \bibfield  {author} {\bibinfo {author} {\bibfnamefont {M.~B.}\ \bibnamefont
  {Hastings}},\ }\href@noop {} {\bibfield  {journal} {\bibinfo  {journal}
  {Quantum Info. Comput.}\ }\textbf {\bibinfo {volume} {17}},\ \bibinfo {pages}
  {488} (\bibinfo {year} {2017})}\BibitemShut {NoStop}%
\bibitem [{\citenamefont {Fuchs}\ and\ \citenamefont {van~de
  Graaf}(1999)}]{FG99}%
  \BibitemOpen
  \bibfield  {author} {\bibinfo {author} {\bibfnamefont {C.}~\bibnamefont
  {Fuchs}}\ and\ \bibinfo {author} {\bibfnamefont {J.}~\bibnamefont {van~de
  Graaf}},\ }\href {https://doi.org/10.1109/18.761271} {\bibfield  {journal}
  {\bibinfo  {journal} {IEEE Transactions on Information Theory}\ }\textbf
  {\bibinfo {volume} {45}},\ \bibinfo {pages} {1216} (\bibinfo {year}
  {1999})}\BibitemShut {NoStop}%
\bibitem [{\citenamefont {Horodecki}\ \emph {et~al.}(2003)\citenamefont
  {Horodecki}, \citenamefont {Shor},\ and\ \citenamefont {Ruskai}}]{HSR03}%
  \BibitemOpen
  \bibfield  {author} {\bibinfo {author} {\bibfnamefont {M.}~\bibnamefont
  {Horodecki}}, \bibinfo {author} {\bibfnamefont {P.~W.}\ \bibnamefont
  {Shor}},\ and\ \bibinfo {author} {\bibfnamefont {M.~B.}\ \bibnamefont
  {Ruskai}},\ }\href {https://doi.org/10.1142/S0129055X03001709} {\bibfield
  {journal} {\bibinfo  {journal} {Reviews in Mathematical Physics}\ }\textbf
  {\bibinfo {volume} {15}},\ \bibinfo {pages} {629} (\bibinfo {year} {2003})},\
  \Eprint {https://arxiv.org/abs/https://doi.org/10.1142/S0129055X03001709}
  {https://doi.org/10.1142/S0129055X03001709} \BibitemShut {NoStop}%
\bibitem [{\citenamefont {Van Den~Nes}(2010)}]{N10}%
  \BibitemOpen
  \bibfield  {author} {\bibinfo {author} {\bibfnamefont {M.}~\bibnamefont {Van
  Den~Nes}},\ }\href@noop {} {\bibfield  {journal} {\bibinfo  {journal}
  {Quantum Info. Comput.}\ }\textbf {\bibinfo {volume} {10}},\ \bibinfo {pages}
  {258} (\bibinfo {year} {2010})}\BibitemShut {NoStop}%
\bibitem [{\citenamefont {Bravyi}\ \emph {et~al.}(2019)\citenamefont {Bravyi},
  \citenamefont {Browne}, \citenamefont {Calpin}, \citenamefont {Campbell},
  \citenamefont {Gosset},\ and\ \citenamefont {Howard}}]{BBCCGH19}%
  \BibitemOpen
  \bibfield  {author} {\bibinfo {author} {\bibfnamefont {S.}~\bibnamefont
  {Bravyi}}, \bibinfo {author} {\bibfnamefont {D.}~\bibnamefont {Browne}},
  \bibinfo {author} {\bibfnamefont {P.}~\bibnamefont {Calpin}}, \bibinfo
  {author} {\bibfnamefont {E.}~\bibnamefont {Campbell}}, \bibinfo {author}
  {\bibfnamefont {D.}~\bibnamefont {Gosset}},\ and\ \bibinfo {author}
  {\bibfnamefont {M.}~\bibnamefont {Howard}},\ }\href
  {https://doi.org/10.22331/q-2019-09-02-181} {\bibfield  {journal} {\bibinfo
  {journal} {{Quantum}}\ }\textbf {\bibinfo {volume} {3}},\ \bibinfo {pages}
  {181} (\bibinfo {year} {2019})}\BibitemShut {NoStop}%
\bibitem [{\citenamefont {Hillmich}\ \emph {et~al.}(2020)\citenamefont
  {Hillmich}, \citenamefont {Markov},\ and\ \citenamefont {Wille}}]{HMW20}%
  \BibitemOpen
  \bibfield  {author} {\bibinfo {author} {\bibfnamefont {S.}~\bibnamefont
  {Hillmich}}, \bibinfo {author} {\bibfnamefont {I.~L.}\ \bibnamefont
  {Markov}},\ and\ \bibinfo {author} {\bibfnamefont {R.}~\bibnamefont
  {Wille}},\ }in\ \href {https://doi.org/10.1109/DAC18072.2020.9218555} {\emph
  {\bibinfo {booktitle} {2020 57th ACM/IEEE Design Automation Conference
  (DAC)}}}\ (\bibinfo {year} {2020})\ pp.\ \bibinfo {pages} {1--6}\BibitemShut
  {NoStop}%
\bibitem [{\citenamefont {Lewis}\ \emph {et~al.}(2012)\citenamefont {Lewis},
  \citenamefont {Jennings}, \citenamefont {Barrett},\ and\ \citenamefont
  {Rudolph}}]{LJBT12}%
  \BibitemOpen
  \bibfield  {author} {\bibinfo {author} {\bibfnamefont {P.~G.}\ \bibnamefont
  {Lewis}}, \bibinfo {author} {\bibfnamefont {D.}~\bibnamefont {Jennings}},
  \bibinfo {author} {\bibfnamefont {J.}~\bibnamefont {Barrett}},\ and\ \bibinfo
  {author} {\bibfnamefont {T.}~\bibnamefont {Rudolph}},\ }\href
  {https://doi.org/10.1103/PhysRevLett.109.150404} {\bibfield  {journal}
  {\bibinfo  {journal} {Phys. Rev. Lett.}\ }\textbf {\bibinfo {volume} {109}},\
  \bibinfo {pages} {150404} (\bibinfo {year} {2012})}\BibitemShut {NoStop}%
\bibitem [{\citenamefont {Aaronson}\ \emph {et~al.}(2013)\citenamefont
  {Aaronson}, \citenamefont {Bouland}, \citenamefont {Chua},\ and\
  \citenamefont {Lowther}}]{ABCL13}%
  \BibitemOpen
  \bibfield  {author} {\bibinfo {author} {\bibfnamefont {S.}~\bibnamefont
  {Aaronson}}, \bibinfo {author} {\bibfnamefont {A.}~\bibnamefont {Bouland}},
  \bibinfo {author} {\bibfnamefont {L.}~\bibnamefont {Chua}},\ and\ \bibinfo
  {author} {\bibfnamefont {G.}~\bibnamefont {Lowther}},\ }\href
  {https://doi.org/10.1103/PhysRevA.88.032111} {\bibfield  {journal} {\bibinfo
  {journal} {Phys. Rev. A}\ }\textbf {\bibinfo {volume} {88}},\ \bibinfo
  {pages} {032111} (\bibinfo {year} {2013})}\BibitemShut {NoStop}%
\bibitem [{\citenamefont {Gutoski}\ and\ \citenamefont {Watrous}(2007)}]{GJ07}%
  \BibitemOpen
  \bibfield  {author} {\bibinfo {author} {\bibfnamefont {G.}~\bibnamefont
  {Gutoski}}\ and\ \bibinfo {author} {\bibfnamefont {J.}~\bibnamefont
  {Watrous}},\ }in\ \href {https://doi.org/10.1145/1250790.1250873} {\emph
  {\bibinfo {booktitle} {Proceedings of the Thirty-Ninth Annual ACM Symposium
  on Theory of Computing}}},\ \bibinfo {series and number} {STOC '07}\
  (\bibinfo  {publisher} {Association for Computing Machinery},\ \bibinfo
  {address} {New York, NY, USA},\ \bibinfo {year} {2007})\ pp.\ \bibinfo
  {pages} {565--574}\BibitemShut {NoStop}%
\bibitem [{\citenamefont {Chiribella}\ \emph {et~al.}(2009)\citenamefont
  {Chiribella}, \citenamefont {D'Ariano},\ and\ \citenamefont
  {Perinotti}}]{GGP09}%
  \BibitemOpen
  \bibfield  {author} {\bibinfo {author} {\bibfnamefont {G.}~\bibnamefont
  {Chiribella}}, \bibinfo {author} {\bibfnamefont {G.~M.}\ \bibnamefont
  {D'Ariano}},\ and\ \bibinfo {author} {\bibfnamefont {P.}~\bibnamefont
  {Perinotti}},\ }\href {https://doi.org/10.1103/PhysRevA.80.022339} {\bibfield
   {journal} {\bibinfo  {journal} {Phys. Rev. A}\ }\textbf {\bibinfo {volume}
  {80}},\ \bibinfo {pages} {022339} (\bibinfo {year} {2009})}\BibitemShut
  {NoStop}%
\end{thebibliography}%

\newpage
\appendix

\section{Volume of $\epsilon$-ball in $\puredop{\cd}$}
\label{appendix:incovering}
To construct an $\epsilon$-covering, we first derive the volume of $\epsilon$-ball $\eball{\epsilon}{\phi}:=\left\{\psi\in\puredop{\cd}:\trdist{\psi-\phi}<\epsilon\right\}$ in $\puredop{\cd}$ as follows:
\begin{equation}
\label{eq:volume_of_eball}
\forall d\in\nn,\forall \epsilon\in(0,1],\forall \phi\in\puredop{\cd}, \mu(\eball{\epsilon}{\phi})=\epsilon^{2(d-1)},
\end{equation}
where $\mu$ is the unitarily invariant probability measure on the Borel sets of $\puredop{\cd}$.

When $d=1$, Eq.~\eqref{eq:volume_of_eball} holds. By assuming $d\geq2$, we proceed as follows:
\begin{eqnarray}
&&\mu(\eball{\epsilon}{\phi})\nonumber\\
&&=\mu\left(\left\{\psi\in\puredop{\cd}:\trdist{\ketbra{0}-\psi}<\epsilon\right\}\right)\nonumber\\
&&=\mu\left(\left\{\psi\in\puredop{\cd}:|\braket{0}{\psi}|^2>1-\epsilon^2\right\}\right)\nonumber\\
&&=\xi\left(\left\{\vec{x}\in\rdim{2d}:\lpnorm{2}{\vec{x}}=1\wedge x_1^2+x_2^2>1-\epsilon^2\right\}\right),
\end{eqnarray}
where the first equality uses fixed pure state $\ket{0}$ and the unitary invariance of $\mu$ and the trace distance, the second equality uses Eq.~\eqref{ineq:FG}, and the third equality uses the relationship between $\mu$ and the uniform spherical probability measure $\xi$. Using a spherical coordinate system, we can proceed as follows:
\begin{eqnarray}
\mu(\eball{\epsilon}{\phi})&=&\frac{V(\epsilon)}{V(1)},\\
\text{where}\ V(\epsilon)&:=&\int_{D_\epsilon}\sin^{2d-2}\theta\sin^{2d-3}\phi d\theta d\phi\nonumber\\
&=&4\int_{\hat{D}_\epsilon}\sin^{2d-2}\theta\sin^{2d-3}\phi d\theta d\phi
\end{eqnarray}
and the domain of the integration $D_\epsilon$ is given by $\{(\theta,\phi):\theta,\phi\in(0,\pi),\sin\theta\sin\phi<\epsilon\}$. Since the domain and that of the integrand have reflection symmetries about two lines $\theta=\frac{\pi}{2}$ and $\phi=\frac{\pi}{2}$, it is sufficient to perform the integration in domain $\hat{D}_\epsilon:=\{(\theta,\phi):\theta,\phi\in\left(0,\frac{\pi}{2}\right),\sin\theta\sin\phi<\epsilon\}$. By changing the variables as
$ \begin{pmatrix}
 x\\y
\end{pmatrix}
=
\begin{pmatrix}
 \sin\theta\sin\phi\\
 \sin\theta
\end{pmatrix}$, we obtain
\begin{eqnarray}
 V(\epsilon)&=&4\int_0^{\epsilon}dxx^{2d-3}\int_x^1dy\frac{y}{\sqrt{1-y^2}\sqrt{y^2-x^2}}\nonumber\\
 &=&4\int_0^{\epsilon}dxx^{2d-3}\left[\arcsin\sqrt{\frac{1-y^2}{1-x^2}}\right]_1^x\nonumber\\
 &=&\frac{\pi}{d-1}\epsilon^{2(d-1)}
\end{eqnarray}
for $\epsilon\in[0,1]$. This completes the calculation.

\section{Upper bound for internal covering number $I_{in}$}
\label{appendix:construction}
We construct an internal $\epsilon$-covering ($\epsilon\in(0,1]$) following the proof in \cite[Corollary 5.5]{ABmeetBanach}. The construction is based on the fact that sufficiently many pure states randomly sampled form an $\epsilon$-covering. However, since the probability of a new random pure state residing in the uncovered region decreases when many random $\epsilon$-balls are sampled, it is better to stop sampling a pure state and change the strategy of the construction.

In the proof, we represent some parameters explicitly, which are tailored to the $\epsilon$-covering with respect to the trace distance.
 Assume $d\geq2$ and let $D=2(d-1)(\geq2)$. Let $\{\phiI{j}\in\puredop{\cd}\}_{j=1}^{J_R}$ be a set of finite randomly sampled pure states with respect to product measure $\mu^{J_R}$. The expected volume of the region not covered by $A:=\cup_{j=1}^{J_R}\eball{\epsilon_R}{\phiI{j}}$ ($0<\epsilon_R\leq 1$) can be calculated as follows:
 \begin{eqnarray}
 &&\int d\mu^{J_R}\mu\left(A^c\right)\nonumber\\
 &&= \int d\mu^{J_R}\int d\mu(\psi)\prod_{j=1}^{J_R}\indicator{\trdist{\psi-\phiI{j}}\geq\epsilon_R}\nonumber\\
 &&=\int d\mu(\psi)\prod_{j=1}^{J_R} \int d\mu(\phiI{j})\indicator{\trdist{\psi-\phiI{j}}\geq\epsilon_R}\nonumber\\
 &&=\left(1-\epsilon_R^D\right)^{J_R}\leq\exp\left(-J_R\epsilon_R^D\right),
\end{eqnarray}
where we use Fubini's theorem and Eq.~\eqref{eq:volume_of_eball} in the second equation and the third equation, respectively. Note that $\indicator{X}\in\{0,1\}$ is the indicator function, i.e., $\indicator{X}=1$ iff $X$ is true.

Thus, there exists $\{\phiI{j}\}_{j=1}^{J_R}$ such that $\mu\left(A^c\right)\leq\exp\left(-J_R\epsilon_R^D\right)$. Pick $\{\psiI{j}\}_{j=1}^{J_P}$ as much as possible such that $\eball{\epsilon_P}{\psiI{j}}$ are disjoint and contained in $A^c$. When $0<\epsilon_P\leq\epsilon_R\leq 1$, we can verify that $\{\phiI{j}\}_{j=1}^{J_R}\cup\{\psiI{j}\}_{j=1}^{J_P}$ is an $(\epsilon_R+\epsilon_P)$-covering  and its size $J:=J_R+J_P$ is upper bounded as
\begin{equation}
 J\leq J_R+\frac{\exp\left(-J_R\epsilon_R^D\right)}{\epsilon_P^D}.
\end{equation}
By setting $J_R=\left\lceil\frac{D}{\epsilon_R^D}\ln\left(\frac{\epsilon_R}{\epsilon_P}\right)\right\rceil$, $\epsilon_P=\frac{\epsilon_R}{x}$ and $\epsilon_R=\frac{x}{1+x}\epsilon$ with $x\geq1$, we obtain the following upper bound:
\begin{equation}
 J\leq\left\lceil \frac{D\ln x}{\epsilon_R^D}\right\rceil+\frac{1}{\epsilon_R^D}
\leq\frac{1}{\epsilon^D}\left\{\left(1+\frac{1}{x}\right)^D(D\ln x+1)+1\right\}=\frac{2d\ln d}{\epsilon^D}\cdot\frac{\alpha(d,x)}{2d\ln d},
\end{equation}
where $\alpha(d,x)=\left(1+\frac{1}{x}\right)^D(D\ln x+1)+1$.
Since $\lim_{d\rightarrow\infty}\frac{\alpha(d,D\ln D)}{2d\ln d}=1$, we obtain that for any $r>2$ there exists $d_0\in\nn$ such that
\begin{equation}
\forall d\geq d_0,\forall\epsilon\in(0,1], J\leq rd(\ln d)\left(\frac{1}{\epsilon}\right)^{2(d-1)}.
\end{equation}
For example, if $r=5$, we can set $d_0=2$.
This completes the proof.

\section{Lower bound for external covering number $I_{ex}$}
\label{appendix:excovering}
We can derive a lower bound for the external covering number as a direct consequence of the following upper bounds on the volume of the maximum intersection of $\epsilon$-ball in $\dop{\cd}$ and $\puredop{\cd}$, which will be proven in this section: For any $\rho\in\dop{\cd}$ and any $\epsilon\in\left(0,\frac{1}{2}\right]$,
\begin{eqnarray}
 \label{ineq:volume_of_exeball1}
 \forall d\geq1,\mu(\eball{\epsilon}{\rho})&\leq&(2\epsilon)^{2(d-1)},\\
 \label{ineq:volume_of_exeball}
  \forall d\geq4,\mu(\eball{\epsilon}{\rho})&\leq&\epsilon^{2(d-1)}.
\end{eqnarray}
Combined with Eq.~\eqref{eq:volume_of_eball}, Eq.~\eqref{ineq:volume_of_exeball} implies that the maximum intersection is achieved when $\rho$ is pure if two conditions $d\geq4$ and $\epsilon\in\left(0,\frac{1}{2}\right]$ are satisfied. These two conditions are not tight but cannot be fully relaxed since $\mu(\eball{\epsilon}{\sigma})=1$ for any $d\in\nn$ and $\epsilon>1-\frac{1}{d}$, where $\sigma$ is the maximally mixed state $\frac{1}{d}\idop$.

\subsection{Proof of Eq.~\eqref{ineq:volume_of_exeball1}}
By defining $\phi:=\arg\min_{\phi\in\puredop{\cd}}\trdist{\phi-\rho}$, we obtain $\eball{\epsilon}{\rho}\subseteq\eball{2\epsilon}{\phi}$.  For $\trdist{\psi-\phi}\leq\trdist{\psi-\rho}+\trdist{\phi-\rho}\leq2\trdist{\psi-\rho}<2\epsilon$ for any pure state $\psi\in\eball{\epsilon}{\rho}$. This completes the proof since $\mu(\eball{\epsilon}{\rho})\leq\mu(\eball{2\epsilon}{\phi})=(2\epsilon)^{2(d-1)}$ for any $\epsilon\in\left(0,\frac{1}{2}\right]$.

\subsection{Proof of Eq.~\eqref{ineq:volume_of_exeball}}
Let $\rho=\sum_{i=0}^{d-1}p_i\ketbra{i}$, where $\{\ket{i}\}_i$ is a set of eigenvectors of $\rho$ and eigenvalues are arranged in decreasing order, i.e., $p_0\geq p_1\geq \cdots$. Since $\mu(\eball{\epsilon}{\rho})$ depends not on the eigenvectors but on the eigenvalues of $\rho$, it is sufficient to consider only diagonal $\rho$ with respect to a fixed basis. However, it is difficult to exactly calculate $\mu(\eball{\epsilon}{\rho})$ due to a complicated relationship between $\psi$ and the largest eigenvalue of $\psi-\rho$, resulting from the condition $\epsilon>\trdist{\psi-\rho}=\lambda_{\max}(\psi-\rho)$. 

We derive lower bound $f_\rho(\psi)$ of $\trdist{\psi-\rho}$ and use the relationship $\mu(\eball{\epsilon}{\rho})\leq\mu\left(\{\psi:f_\rho(\psi)<\epsilon\}\right)$ to show Eq.~\eqref{ineq:volume_of_exeball}, where $f_\rho$ is a measurable function. Since simple bound $f_\rho(\psi)=1-\fidelity{\psi}{\rho}$ is too loose to show Eq.~\eqref{ineq:volume_of_exeball}, we derive a tighter lower bound as follows: 
Let $\Pi$ and $\Pi^\bot$ be the Hermitian projectors on two-dimensional subspace $\vv\supseteq\vspan{\{\ket{0},\ket{\psi}\}}$ and its orthogonal complement, respectively. We then obtain
\begin{eqnarray}
\trdist{\psi-\rho}&\geq&\trdist{\Pi(\psi-\rho)\Pi+\Pi_\bot(\psi-\rho)\Pi_\bot}\nonumber\\
\label{eq:fpsi}
&=&\trdist{\psi-\Pi\rho\Pi}+\trdist{\Pi_\bot\rho\Pi_\bot},
\end{eqnarray}
where we use the monotonicity of the trace distance under a CPTP mapping in the first inequality. Define $f_\rho(\psi)$ as the value in Eq.~\eqref{eq:fpsi}, which can be explicitly written as
 \begin{equation}
 f_\rho(\psi)= \frac{1}{2}\sqrt{(1+p_0-q)^2-4(p_0-q)|\braket{0}{\psi}|^2}+\frac{1}{2}(1-p_0-q),
\end{equation}
where $q=\bra{0_\bot}\rho\ket{0_\bot}$ and $\{\ket{0},\ket{0_\bot}\}$ is an orthonormal basis of $\vv$. The explicit formula implies $f_\rho$ is  uniquely defined (although neither $\vv$ nor $q$ is uniquely defined if $\psi=\ketbra{0}$) and continuous, thus measurable. Since $\mu(\eball{\epsilon}{\rho})=0$, satisfying Eq.~\eqref{ineq:volume_of_exeball}, if $p_0\leq1-\epsilon$, we consider the case $p_0>1-\epsilon$. By further assuming $\epsilon\in\left(0,\frac{1}{2}\right]$, we obtain
\begin{equation}
 f_\rho(\psi)< \epsilon\Leftrightarrow
 |\braket{0}{\psi}|^2> \frac{(\epsilon+p_0)(1-q-\epsilon)}{p_0-q},
\end{equation}
where the this condition is not trivial, i.e., $\frac{(\epsilon+p_0)(1-q-\epsilon)}{p_0-q}\in(0,1)$.
Assuming $d\geq4$, we calculate an upper bound on $\mu(\eball{\epsilon}{\rho})$ as follows: 
By defining $\unitary{\hh}$ as the set of unitary operators on $\hh$ and $C:\unitary{\cdim{d-1}}\rightarrow\unitary{\cd}$ as $C(U):=\ketbra{0}\oplus U$, we can show that for any unitary operator $U\in\unitary{\cdim{d-1}}$,
\begin{eqnarray}
 &&\mu\left(\{\psi\in\puredop{\cd}:f_\rho(\psi)<\epsilon\}\right)\nonumber\\
&&=\mu\left(\{\psi:f_\rho(C(U)\psi C(U)^\dag)<\epsilon\}\right)\nonumber\\
\label{eq:uinv}
&&= \int_{\puredop{\cd}}d\mu(\psi)\indicator{|\braket{0}{\psi}|^2>\frac{(\epsilon+p_0)(1-q_U-\epsilon)}{p_0-q_U}}
\end{eqnarray}
where we use the unitarily invariance of $\mu$ in the first equality, $q_U=\bra{0_\bot} C(U)^\dag\rho C(U)\ket{0_\bot}$, and $\indicator{X}\in\{0,1\}$ is the indicator function, i.e., $\indicator{X}=1$ iff $X$ is true. By integrating Eq.~\eqref{eq:uinv} with respect to the Haar measure on $\unitary{\cdim{d-1}}$ and using Fubini's theorem, we obtain
 \begin{equation}
 \mu\left(\{\psi\in\puredop{\cd}:f_\rho(\psi)<\epsilon\}\right)=\int_{\puredop{\cd}}d\mu(\psi)\int_{\puredop{\cdim{d-1}}}d\mu(\phi) \indicator{|\braket{0}{\psi}|^2>\frac{(\epsilon+p_0)(1-\fidelity{\rho}{\phi}-\epsilon)}{p_0-\fidelity{\rho}{\phi}}},
\end{equation}
where $\phi\in\puredop{\cdim{d-1}}$ is identified with a pure state on $\puredop{\cd}$ acting on subspace $\vspan{\{\ket{1},\cdots,\ket{d-1}\}}$. Using Fubini's theorem again and Eq.~\eqref{eq:volume_of_eball}, we can proceed with the calculation:
\begin{eqnarray}
 &&=\int_{\puredop{\cdim{d-1}}}d\mu(\phi)\delta(\fidelity{\rho}{\phi})^{2(d-1)}\nonumber\\
 &&=\int_{\puredop{\cdim{d-1}}}d\mu(\phi)\delta\left(\sum_{i=1}^{d-1}p_i|\braket{i}{\phi}|^2\right)^{2(d-1)}\nonumber\\
 &&\leq\int_{\puredop{\cdim{d-1}}}d\mu(\phi)\delta\left((1-p_0)|\braket{1}{\phi}|^2\right)^{2(d-1)}\nonumber\\
 &&=(d-2)\int_0^1(1-x)^{d-3}\delta((1-p_0)x)^{2(d-1)}dx=:g_{d,\epsilon}(p_0),
\end{eqnarray}
where $\delta(q)=\sqrt{\frac{(\epsilon+q)(p_0+\epsilon-1)}{p_0-q}}$, we use the convexity of $\delta^{2(d-1)}$ and the unitary invariance of $\mu$ in the last inequality, and we use the probability density of $x=|\braket{1}{\phi}|^2$ derived by Eq.~\eqref{eq:volume_of_eball} in the last equality.
To confirm the calculation, we plot a comparison between $\mu(\eball{\epsilon}{\rho})$ and its upper bound $g_{d,\epsilon}(p_0)$ for a particular $\rho$ as shown in Fig.~\ref{fig:boundcomparison}, where we use the following explicit expression of $g_{4,\epsilon}(p_0)$:
\begin{eqnarray}
 \label{eq:g4expression}
 g_{4,\epsilon}(p_0)&=&2(p_0+\epsilon-1)^3\bigg\{\frac{1-6b-ab^2}{2ab^2}+\frac{3(a+1)}{a(b-a)}\left(1-\frac{a}{b-a}\log\frac{b}{a}\right)\bigg\},
\end{eqnarray}
where $a=\frac{2p_0-1}{\epsilon+p_0}$ and $b=\frac{p_0}{\epsilon+p_0}$ and $p_0\in(1-\epsilon,1)$. Note that $g_{4,\epsilon}(1)=\epsilon^6=\lim_{p_0\rightarrow1}g_{4,\epsilon}(p_0)$.

\begin{figure}[h]
\includegraphics[height=.23\textheight]{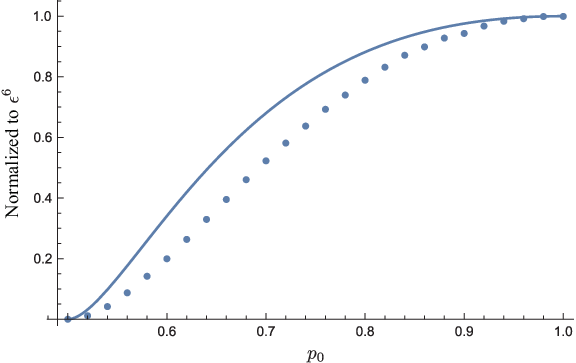}
\caption{\label{fig:boundcomparison} Plots of estimated values of $\mu(\eball{\frac{1}{2}}{\rho})$ (dots) and $g_{4,\frac{1}{2}}(p_0)$ (curve) for $\rho=p_0\ketbra{0}+(1-p_0)\ketbra{1}\in\dop{\cdim{4}}$. $\mu(\eball{\frac{1}{2}}{\rho})$ is estimated by uniformly sampling $10^7$ pure states. The plots indicate $\mu(\eball{\frac{1}{2}}{\rho})$ is well upper bounded by $g_{4,\frac{1}{2}}(p_0)$.}
\end{figure}

It is sufficient to show that under the two conditions $\epsilon\in\left(0,\frac{1}{2}\right]$ and $d\geq4$,
\begin{equation}
\forall p_0\in(1-\epsilon,1),\frac{dg_{d,\epsilon}}{dp_0}\geq0
\end{equation}
since $g_{d,\epsilon}(1)=\epsilon^{2(d-1)}$. Since the integrand of $g_{d,\epsilon}$ and its partial derivative with respect to $p_0$ are continuous, we can interchange the partial differential and integral operators:
\begin{eqnarray}
 \frac{dg_{d,\epsilon}}{dp_0}&=&(d-2)\int_0^1(1-x)^{d-3}\frac{\partial}{\partial p_0}\delta((1-p_0)x)^{2(d-1)}dx\nonumber\\
 &=&\alpha_{d,\epsilon}(p_0)\int_0^1\beta_{\epsilon}(p_0,x)\gamma_{d,\epsilon}(p_0,x)dx,
\end{eqnarray}
where $\alpha_{d,\epsilon}(p_0)=(d-2)(d-1)(p_0+\epsilon-1)^{d-2}$, $\beta_{\epsilon}(p_0,x)=-(1-p_0)^2x^2+(1-\epsilon-\epsilon^2-p_0^2)x+(1-\epsilon)\epsilon$ and $\gamma_{d,\epsilon}(p_0,x)=\frac{(1-x)^{d-3}(\epsilon+(1-p_0)x)^{d-2}}{(p_0-(1-p_0)x)^d}$. Since $\alpha_{d,\epsilon}$ and $\gamma_{d,\epsilon}$ are non-negative in the entire considered region $R:=\{(p_0,x):p_0\in(1-\epsilon,1)\wedge x\in[0,1]\}$, $\frac{dg_{d,\epsilon}}{dp_0}\geq0$ if $\beta_{\epsilon}$ is non-negative for all $x\in[0,1]$. However, $\beta_{\epsilon}$ can be negative for some $x\in[0,1]$ if and only if $\beta_{\epsilon}(p_0,1)<0$. Taking account of considered region $R$, it is sufficient to show $\frac{dg_{d,\epsilon}}{dp_0}\geq0$ for all $p_0\in\left(\frac{1+\sqrt{1-4\epsilon^2}}{2},1\right)(\subseteq(1-\epsilon,1))$, where $\beta_{\epsilon}$ can be negative. 

For fixed $p^*\in\left(\frac{1+\sqrt{1-4\epsilon^2}}{2},1\right)$, let $x^*\in(0,1)$ satisfy $\beta_{\epsilon}(p^*,x^*)=0$. Since $\beta_{\epsilon}(p^*,x)$ is monotonically decreasing in $x\geq0$, $x^*$ is uniquely defined, $\beta_{\epsilon}(p^*,x)>0$ if $x\in[0,x^*)$ and $\beta_{\epsilon}(p^*,x)<0$ if $x\in(x^*,1]$. Thus, showing
\begin{equation}
\label{eq:condition1}
 \forall d\geq4,\exists c>0,
 \left\{
\begin{array}{ll}
 \gamma_{d+1,\epsilon}(p^*,x)\geq c\gamma_{d,\epsilon}(p^*,x)&\text{for}\ x\in[0,x^*)\\
  \gamma_{d+1,\epsilon}(p^*,x)\leq c\gamma_{d,\epsilon}(p^*,x)&\text{for}\ x\in(x^*,1]
\end{array}
 \right.
\end{equation}
and
\begin{equation}
\label{eq:condition2}
\frac{dg_{4,\epsilon}}{dp_0}\bigg|_{p_0=p^*}\geq0,
\end{equation}
is sufficient for $\forall d\geq4,\frac{dg_{d,\epsilon}}{dp_0}\Big|_{p_0=p^*}\geq0$. For
\begin{eqnarray}
 \alpha_{d+1,\epsilon}(p^*)^{-1}\frac{dg_{d+1,\epsilon}}{dp_0}\bigg|_{p_0=p^*}&=&\int_0^{x^*}\beta_{\epsilon}(p^*,x)\gamma_{d+1,\epsilon}(p^*,x)dx+\int_{x^*}^{1}\beta_{\epsilon}(p^*,x)\gamma_{d+1,\epsilon}(p^*,x)dx\nonumber\\
 &\geq& c\biggl\{\int_0^{x^*}\beta_{\epsilon}(p^*,x)\gamma_{d,\epsilon}(p^*,x)dx+\int_{x^*}^{1}\beta_{\epsilon}(p^*,x)\gamma_{d,\epsilon}(p^*,x)dx\biggr\}\nonumber\\
 &=&c\alpha_{d,\epsilon}(p^*)^{-1}\frac{dg_{d,\epsilon}}{dp_0}\bigg|_{p_0=p^*}
\end{eqnarray}
holds for any $d\geq4.$

First, we show Eq.~\eqref{eq:condition1}. By observing that for any $d\geq4$,
\begin{equation}
\gamma_{d+1,\epsilon}(p^*,x)-c \gamma_{d,\epsilon}(p^*,x)= \gamma_{d,\epsilon}(p^*,x)\left(\frac{(1-x)(\epsilon+(1-p^*)x)}{p^*-(1-p^*)x}-c\right),
\end{equation}
 $h_{\epsilon,p^*}(x):=\frac{(1-x)(\epsilon+(1-p^*)x)}{p^*-(1-p^*)x}$ is monotonically decreasing in $x\in[\hat{x},1]$ and $h_{\epsilon,p^*}(x)\geq h_{\epsilon,p^*}(\hat{x})$ for $x\in[0,\hat{x}]$ with $\hat{x}:=\max\left\{0,1-\frac{2p^*-1}{p^*(1-p^*)}\epsilon\right\}(\leq x^*)$, setting $c=h_{\epsilon,p^*}(x^*)(>0)$ implies Eq.~\eqref{eq:condition1}.

Next, Eq.~\eqref{eq:condition2} can be verified by using the explicit expression Eq.~\eqref{eq:g4expression}.

\section{Proof for Thoerem \ref{thm:unitary} and its sharp lower bound}
\label{appendix:unitarymixing}
In this section, we prove Theorem~\ref{thm:unitary} with a tighter lower bound as the following theorem. After the proof, we show that this lower bound is sharp by constructing set $\{\Upsilon_x\}_x$ of unitary transformations such that $\max_{\Upsilon}\min_{p}\frac{1}{2}\diamondnorm{\Upsilon-\sum_{x\in X}p(x)\Upsilon_x}$ is arbitrarily close to its lower bound $\frac{4\beta}{d}\left(1-\frac{\beta}{d}\right)=\frac{4}{d}\left(1-\frac{1}{d}\right)$ while the upper bound given in the theorem is $\alpha=1$ (actually, each $\Upsilon_x$ in the set can be perfectly distinguished from the identity transformation,) where $\alpha$ and $\beta$ are defined in the following theorem.

\begin{theorem}
\label{thm:unitarymixing}
For an integer $d\geq2$ specified below, let $\{\Upsilon_x\}_{x\in X}$ be a finite set of unitary transformations on $\linop{\cd}$. Then, it holds that
\begin{eqnarray}
\frac{4\beta}{d}\left(1-\frac{\beta}{d}\right)\leq \max_{\Upsilon}\min_{p}\frac{1}{2}\diamondnorm{\Upsilon-\sum_{x\in X}p(x)\Upsilon_x}\leq \alpha\\
 \alpha= \max_{\Upsilon}\min_{x\in X}\left(\frac{1}{2}\diamondnorm{\Upsilon-\Upsilon_x}\right)^2, \ \ \beta=1-\sqrt{1-\alpha},
 \end{eqnarray}
where the first minimum is taken over probability distribution $p$ over $X$. 
Note that if $d=2$, the equalities hold.
\end{theorem}

The lower bound of Theorem~\ref{thm:unitary} can be derived from this theorem as follows:
\begin{equation}
\frac{4\beta}{d}\left(1-\frac{\beta}{d}\right)\geq \frac{4\beta}{d}\left(1-\frac{\beta}{2}\right)=\frac{2}{d}\alpha.
\end{equation}

In the proof, we use the following fact about the diamond norm: For two CPTP linear mappings $\Gamma$ and $\Lambda$ from $\linop{\hh_1}$ to $\linop{\hh_2}$, we can verify
\begin{equation}
\label{eq:diamondnorm}
 \frac{1}{2} \diamondnorm{\Gamma-\Lambda}=\max_{M\in\mathbf{T}(\hh_1:\hh_2)}\tr{(J(\Gamma-\Lambda))M},
\end{equation}
where $J(\Xi):=\sum_{i,j}\ket{i}\bra{j}\otimes\Xi(\ket{i}\bra{j})\in\linop{\hh_1\otimes\hh_2}$ is the Choi-Jamio\l kowski operator of linear mapping $\Xi$ and $\mathbf{T}(\hh_1:\hh_2):=\{M\in\pos{\hh_1\otimes\hh_2}:\exists\rho\in\dop{\hh_1},M\leq\rho\otimes\idop\}$ is the set of measuring strategies \cite{GJ07} or that of quantum testers \cite{GGP09}.

\begin{proof}
Let $\Upsilon$ and $\Upsilon_x$ be unitary transformations from $\linop{\hh_1}$ to $\linop{\hh_2}$, defined as $\Upsilon(\rho)=U\rho U^\dag$ and $\Upsilon_x(\rho)=U_x\rho U_x^\dag$, respectively. 

First, we show 
\begin{equation}
\label{ineq:upperdiamond}
 \max_{\Upsilon}\min_{p}\frac{1}{2}\diamondnorm{\Upsilon-\sum_{x\in X}p(x)\Upsilon_x}\leq  \max_{\Upsilon}\min_{x\in X}\left(\frac{1}{2}\diamondnorm{\Upsilon-\Upsilon_x}\right)^2.
\end{equation}

By using Eq.~\eqref{eq:diamondnorm}, we obtain
\begin{eqnarray}
  &&(\text{L.H.S.\ of\ Ineq.}\eqref{ineq:upperdiamond})\nonumber\\
  &&=\max_{\Upsilon}\min_{p}\max_{M\in\mathbf{T}(\hh_1:\hh_2)}\tr{J\left(\Upsilon-\sum_{x\in X}p(x)\Upsilon_x\right)M}\nonumber\\
  &&=\max_{\Upsilon}\max_{M\in\mathbf{T}(\hh_1:\hh_2)}\min_{p}\tr{J\left(\Upsilon-\sum_{x\in X}p(x)\Upsilon_x\right)M}\nonumber\\
  &&=\max_{M\in\mathbf{T}(\hh_1:\hh_2)}\left\{\max_{\Upsilon}\tr{J\left(\Upsilon\right)M}-\max_{x\in X}\tr{J\left(\Upsilon_x\right)M}\right\},
\end{eqnarray}
where we use the minimax theorem in the second equation since $f(p,M):=\tr{J\left(\Upsilon-\sum_xp(x)\Upsilon_x\right)M}$ is affine with respect to each variable and the domain of $p$ and $M$ are compact and convex. 
Since it is known that the set of mappings $\Upsilon\mapsto\tr{J\left(\Upsilon\right)M}$ associated to quantum testers $M$ is equivalent to that of mappings $\Upsilon\mapsto\tr{\Upsilon\otimes id_{\hh_3}(\Phi)\Pi}$ associated to pure states $\Phi$ and hermitian projectors $\Pi$ \cite[Theorem~10]{GGP09} for sufficiently large dimensional Hilbert space $\hh_3$ (to be self-contained, we provide a proof for the equivalence between the two mappings in Appendix \ref{appendix:tester}) we can proceed as follows:
\begin{eqnarray}
\label{eq:max1}
  =\max_{\substack{\Phi\in\puredop{\hh_1\otimes\hh_3}\\\Pi\in\mathbf{Proj}(\hh_2\otimes\hh_3)}}
  \Big(\max_U\tr{(U\otimes\idop_{\hh_3})\Phi (U\otimes\idop_{\hh_3})^\dag\Pi}
  -\max_{x\in X}\tr{(U_x\otimes\idop_{\hh_3})\Phi (U_x\otimes\idop_{\hh_3})^\dag\Pi}\Big),
\end{eqnarray}
where $\mathbf{Proj}(\hh)$ is the set of hermitian projectors on $\hh$. Let $\hat{\Phi}$, $\hat{\Pi}$ and $\hat{U}$ maximize Eq.~\eqref{eq:max1}.
Since arbitrary unitary transformations cannot be represented by ensembles of finite set of unitary transformations, the first term cannot be $0$, thus $\hat{\Pi}\hat{U}\ket{\hat{\Phi}}\neq0$. Let $\hat{\Psi}$ be the pure state such that $\ket{\hat{\Psi}}\propto\hat{\Pi}\hat{U}\ket{\hat{\Phi}}$. Then, we can verify that Eq.~\eqref{eq:max1} is still maximized even if we replace $\hat{\Pi}$ by $\hat{\Psi}$.
Thus, $\Pi$ in Eq.~\eqref{eq:max1} can be restricted as a pure state, i.e., $\Pi=\Psi\in\puredop{\hh_2\otimes\hh_3}$, and we proceed as follows:
\begin{eqnarray}
  =\max_{\substack{\Phi\in\puredop{\hh_1\otimes\hh_3}\\\Psi\in\puredop{\hh_2\otimes\hh_3}}}\Big(\max_U|\bra{\Psi}U\otimes\idop_{\hh_3}\ket{\Phi}|^2 -\max_{x\in X}|\bra{\Psi}U_x\otimes\idop_{\hh_3}\ket{\Phi}|^2\Big).
  \end{eqnarray}
Before proceeding to the next step, we show the set of mappings $f_{\Phi,\Psi}:U\mapsto |\bra{\Psi}U\otimes\idop_{\hh_3}\ket{\Phi}|$ associated to pure states $\Phi$ and $\Psi$ is equivalent to that of mappings $g_A:U\mapsto\left|\tr{A U}\right|$ associated to linear operator $A$ such that $\lpnorm{1}{A}\leq1$, where $\lpnorm{1}{A}$ is the Schatten $1$-norm of $A$. By using decompositions $\ket{\Phi}=\sum_{i,j}\alpha_{ij}\ket{i}\ket{j}$ and $\ket{\Psi}=\sum_{i,j}\beta_{ij}\ket{i}\ket{j}$ with respect to orthonormal bases, we can verify that $g_A$ with $A=\sum_{i,j,k}\alpha_{ik}\beta^*_{jk}\ket{i}\bra{j}$ equals to $f_{\Phi,\Psi}$ and $\lpnorm{1}{A}=\max_Ug_A(U)=\max_Uf_{\Phi,\Psi}(U)\leq1$. On the other hand, By using the singular value decomposition $A=\sum_ip_i\ket{x_i}\bra{y_i}$, where $\lpnorm{1}{A}\leq1$ implies $p+\sum_ip_i=1$ with some $p\geq0$, we can verify that $f_{\Phi,\Psi}$ with $\ket{\Phi}=\sqrt{p}\ket{0}\ket{\bot}+\sum_i\sqrt{p_i}\ket{x_i}\ket{i}$ and $\ket{\Psi}=\sqrt{p}\ket{0}\ket{\bot'}+\sum_i\sqrt{p_i}\ket{y_i}\ket{i}$ ($\{\ket{i}\}_i\cup\{\ket{\bot},\ket{\bot'}\}$ is an orthonormal basis) equals to $g_A$.
  
 By using the equivalent between two sets of mappings and $\lpnorm{1}{A}=\max_U\left|\tr{A U}\right|$, we proceed as follows:
\begin{eqnarray}
&&=\max_{A:\lpnorm{1}{A}\leq1}\left(\lpnorm{1}{A}^2-\max_{x\in X}|\tr{A U_x}|^2\right)\nonumber\\
&&=\max_{\substack{V,\rho\in\dop{\hh_1}\\q\in[0,1]}}q^2\left(1-\max_{x\in X}|\tr{\rho V^\dag U_x}|^2\right)\nonumber\\
\label{eq:mixdiamond}
&&=1-\min_{\substack{V,\rho\in\dop{\hh_1}}}\max_{x\in X}|\tr{\rho V^\dag U_x}|^2,
\end{eqnarray}
where we use the polar decomposition $A=q\rho V^\dag$ in the second equation and $V:\hh_1\rightarrow\hh_2$ is a unitary operator. 
On the other hand, since $\frac{1}{2}\diamondnorm{\Upsilon-\Upsilon_x}=\max_{\Phi\in\puredop{\hh_1\otimes\hh_1'}}\trdist{\Upsilon\otimes id_{\hh_1'}(\Phi)-\Upsilon_x\otimes id_{\hh_1'}(\Phi)}=\max_{\Phi\in\puredop{\hh_1\otimes\hh_1'}}\sqrt{1-\fidelity{\Upsilon\otimes id_{\hh_1'}(\Phi)}{\Upsilon_x\otimes id_{\hh_1'}(\Phi)}}=\sqrt{1-\min_{\Phi\in\puredop{\hh_1\otimes\hh_1'}}|\bra{\Phi}U^\dag U_x\otimes\idop_{\hh_1'}\ket{\Phi}|^2}=\sqrt{1-\min_{\rho\in\dop{\hh_1}}|\tr{\rho U^\dag U_x}|^2}$, it holds that
\begin{equation}
\label{eq:eachdiamond}
(\text{R.H.S.\ of\ Ineq.}\eqref{ineq:upperdiamond})=1-\min_V\max_{x\in X}\min_{\rho\in\dop{\hh_1}}|\tr{\rho V^\dag U_x}|^2.
\end{equation}
Since $\max_x\min_yf(x,y)\leq\min_y\max_xf(x,y)$ for any $f$ if the maximum and the minimum exist, we obtain Ineq.~\eqref{ineq:upperdiamond}.

Next, we show
\begin{equation}
\label{ineq:lowerdiamond}
 \max_{\Upsilon}\min_{p}\frac{1}{2}\diamondnorm{\Upsilon-\sum_{x\in X}p(x)\Upsilon_x}\geq  \frac{4\beta}{d}\left(1-\frac{\beta}{d}\right),
\end{equation}
where $\beta=1-\sqrt{1-\max_{\Upsilon}\min_{x\in X}\left(\frac{1}{2}\diamondnorm{\Upsilon-\Upsilon_x}\right)^2}$. Due to Eq.~\eqref{eq:eachdiamond}, we can verify that $\beta=1-\min_V\max_{x}\min_{\rho}\left|\tr{\rho V^\dag U_x}\right|$.
First, observe that for any unitary operator $W$ on $\cd$ $(d\geq2)$,
\begin{eqnarray}
 \frac{1}{d}\left|\tr{W}\right|=\frac{1}{d}\left|\sum_{i=1}^d\lambda_i(W)\right|
 \leq\frac{2}{d}\min_p\left|\sum_{i=1}^dp(i)\lambda_i(W)\right|+\frac{d-2}{d}
=\frac{2}{d}\min_{\rho\in\dop{\cd}}\left|\tr{\rho W}\right|+\frac{d-2}{d}
\end{eqnarray}
holds, where $\lambda_i(W)$ is the $i$-th eigenvalue of $W$, and in the inequality, we use the following two facts: (i) the minimization is achieved only if $p$ satisfies $\forall i,p(i)\leq\frac{1}{2}$ due to a geometric obseravation, and (ii) for such probability distribution $p\left(\leq\frac{1}{2}\right)$ and complex numbers $\lambda_i\in\{z\in\mathbb{C}:|z|=1\}$, $\left|\sum_ip(i)\lambda_i\right|\geq\left|\sum_i\frac{1}{2}\lambda_i\right|-\left|\sum_i\left(\frac{1}{2}-p(i)\right)\lambda_i\right|\geq\frac{1}{2}\left|\sum_i\lambda_i\right|-\sum_i\left(\frac{1}{2}-p(i)\right)=\frac{1}{2}\left|\sum_i\lambda_i\right|-\frac{d-2}{2}$. By using this, we obtain
\begin{eqnarray}
&& \min_{\substack{V,\rho\in\dop{\hh_1}}}\max_{x\in X}|\tr{\rho V^\dag U_x}|\leq\min_{\substack{V}}\max_{x\in X}\left|\tr{  \frac{\idop}{d}V^\dag U_x}\right|\nonumber\\
&&\leq\frac{2}{d}\min_{\substack{V}}\max_{x\in X}\min_{\rho\in\dop{\cd}}\left|\tr{\rho V^\dag U_x}\right|+\frac{d-2}{d}
=\frac{2}{d}(1-\beta)+\frac{d-2}{d}.
\end{eqnarray}
This and Eq.~\eqref{eq:mixdiamond} implies
\begin{eqnarray}
 (\text{L.H.S.\ of\ Ineq.}\eqref{ineq:lowerdiamond})\geq 1-\left(\frac{2}{d}(1-\beta)+\frac{d-2}{d}\right)^2
 = (\text{R.H.S.\ of\ Ineq.}\eqref{ineq:lowerdiamond}).
\end{eqnarray}
This completes the proof.

\end{proof}

\subsection{Sharpness of the lower bound}
We show the equality in Ineq.~\eqref{ineq:lowerdiamond} holds when we approximate arbitrary unitary transformations by using restricted set $\{\Lambda:\Lambda(\rho)=W\rho W^\dag,W\in\mathbf{W}\}$ of unitary transformations, where
\begin{eqnarray}
\mathbf{W}:=\left\{W:
\begin{matrix}
 W\text{\ is\ a\ unitary\ operator\ on\ $\cd$ s.t.}\\
 \text{the\ convex\ hull\ of\ }W\text{'s\ eigenvalues\ contains\ }0.
\end{matrix}
\right\}.
\end{eqnarray}
 More precisely, since we assume $\{\Upsilon_x\}_x$ is a finite set, we show that Ineq.~\eqref{ineq:lowerdiamond} is getting tighter when $\epsilon\rightarrow0$ and $\{\Upsilon_x:\Upsilon_x(\rho)=U_x\rho U_x^\dag\}_{x}$ is an $\epsilon$-covering of $\{\Lambda:\Lambda(\rho)=W\rho W^\dag,W\in\mathbf{W}\}$, i.e., $\max_\Lambda\min_x\diamondnorm{\Lambda-\Upsilon_x}<\epsilon$, where $U_x\in\mathbf{W}$.
This can be shown by the following two observations:

First, for any $x\in X$, there exists pure state $\phi_x\in\puredop{\cd}$ such that $\tr{\phi_x U_x}=0$ since $0$ is contained in the convex hull of $U_x$'s eigenvalues. Then, we obtain
\begin{eqnarray}
 &&\max_{\Upsilon}\min_{x\in X}\frac{1}{2}\diamondnorm{\Upsilon-\Upsilon_x}\geq\min_{x\in X}\frac{1}{2}\diamondnorm{id-\Upsilon_x}\geq\min_{x\in X}\trdist{\phi_x-\Upsilon_x(\phi_x)}\nonumber\\
 &&=\min_{x\in X}\sqrt{1-\fidelity{\phi_x}{U_x\phi_xU_x^\dag}}=\min_{x\in X}\sqrt{1-|\tr{\phi_xU_x}|^2}=1.
\end{eqnarray}
This implies that $\beta$ in Ineq.~\eqref{ineq:lowerdiamond} satisfies $\beta=1$.

Second, by using Eq.~\eqref{eq:mixdiamond}, we obtain
\begin{eqnarray}
 \max_{\Upsilon}\min_{p}\frac{1}{2}\diamondnorm{\Upsilon-\sum_{x\in X}p(x)\Upsilon_x}
=1-\min_{V,\rho\in\dop{\cd}}\max_{x\in X}\left|\tr{\rho V^\dag U_x}\right|^2
 < 1-\min_{V,\rho\in\dop{\cd}}\max_{W\in\mathbf{W}}\left|\tr{\rho V^\dag W}\right|^2+\epsilon,
\end{eqnarray}
where in the inequality, we use that for any $W\in\mathbf{W}$, there exists $\Upsilon_x$ such that $\epsilon>\frac{1}{2}\diamondnorm{\Lambda-\Upsilon_x}\geq\tr{\Phi(V^\dag W\otimes\idop)\Phi( W^\dag V\otimes\idop)}-\tr{\Phi(V^\dag U_x\otimes\idop)\Phi(U_x^\dag V\otimes\idop)}=\left|\tr{\rho V^\dag W}\right|^2-\left|\tr{\rho V^\dag U_x}\right|^2$, where $\Lambda(\rho)=W\rho W^\dag$ and $\Phi$ is a purification of $\rho$.
 Since $\epsilon$ can be arbitrarily small positive number, showing 
 \begin{equation}
 \label{eq:unitaryapproximation}
\min_{V,\rho\in\dop{\cd}}\max_{W\in\mathbf{W}}\left|\tr{\rho V^\dag W}\right|\geq1-\frac{2}{d}
\end{equation}
is sufficient to prove the sharpness of Ineq.~\eqref{ineq:lowerdiamond}.

For any unitary operator $V$, there exists $\{W_i\in\mathbf{W}\}_{i=1}^d$ such that $V, W_1,\cdots, W_d$ are simultaneously diagonalizable and the $j$-th eigenvalues of $V$ and $W_i$ are the same for all $j\neq i$.
By letting $V^\dag W_i=\idop-\left(1-z_i\right)\ketbra{i}$, where complex number $z_i$ satisfies $|z_i|=1$ and $\{\ket{i}\}_{i=1}^d$ is an orthonormal basis, we can proceed as follows: for any unitary operator $V$,
\begin{eqnarray}
&& \min_{\rho\in\dop{\cd}}\max_{W\in\mathbf{W}}\left|\tr{\rho V^\dag W}\right|\geq\min_{\rho\in\dop{\cd}}\max_{1\leq i\leq d}\left|\tr{\rho (\idop-\left(1-z_i\right)\ketbra{i})}\right|\nonumber\\
&&\geq\min_{\rho\in\dop{\cd}}\max_{1\leq i\leq d}\left(1-\left|1-z_i\right|\bra{i}\rho\ket{i}\right)\geq1-2\max_{\rho\in\dop{\cd}}\min_{1\leq i\leq d}\bra{i}\rho\ket{i}\geq1-\frac{2}{d}.
\end{eqnarray}
This completes the proof.

\section{The operator norm and the diamond norm of unitary transformations}
\label{appendix:diamondnorm}
Suppose $\delta\in[0,\sqrt{2}]$. In this section, we show that $\diamondnorm{\Upsilon_1-\Upsilon_2}<\delta\sqrt{4-\delta^2}$ if $\lpnorm{\infty}{U_1-U_2}<\delta$, where $\Upsilon_i(\rho):=U_i\rho U_i^\dag$ is a unitary transformation on $\linop{\hh}$.

By using the unitarily invariance of the diamond norm and the operator norm, it is sufficient to show $\diamondnorm{id-\Upsilon}<\delta\sqrt{4-\delta^2}$ if $\lpnorm{\infty}{\idop-U}<\delta$, where $\Upsilon(\rho):=U\rho U^\dag$. Let $\lambda_i\in\{z\in\mathbb{C}:|z|=1\}$ be the $i$-th eigenvalue of $U$. Then, we obtain
\begin{eqnarray}
\label{ineq:opnorm}
 \delta>\lpnorm{\infty}{\idop-U}=\max_i\{|1-\lambda_i|\}.
\end{eqnarray}
On the other hand, by using the similar observation used for deriving Eq.~\eqref{eq:eachdiamond},
\begin{eqnarray}
 \diamondnorm{id-\Upsilon}&=&2\sqrt{1-\min_{\rho\in\dop{\hh}}|\tr{\rho U}|^2}=2\sqrt{1-\min_{p}\left|\sum_ip(i)\lambda_i\right|^2}.
\end{eqnarray}
By using an elementary geometric observation shown in Fig.\ref{fig:diamondnorm}, we obtain $\min_{p}\left|\sum_ip(i)\lambda_i\right|>\epsilon=\sqrt{1-\frac{\delta^2}{4}\left(4-\delta^2\right)}$. This completes the proof.

\begin{figure}[h]
\includegraphics[height=.18\textheight]{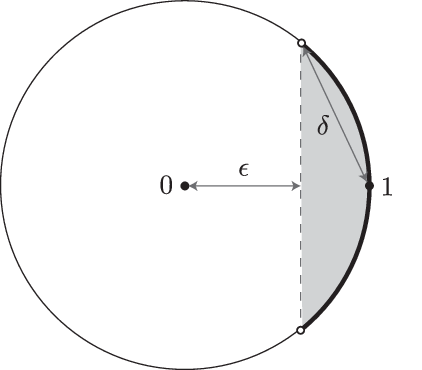}
\caption{\label{fig:diamondnorm} Geometric relationship between the operator norm and the diamond norm. Ineq.~\eqref{ineq:opnorm} implies that eigenvalues $\lambda_i$ of $U$ reside on the bold arc and their convex hull is contained in the shaded segment, which also implies that the shortest distance between the origin and the convex hull is greater than $\epsilon$.}
\end{figure}

\section{Circuit synthesis of single qubit unitary transformations by using gate set $\{S,H,T\}$}
\label{appendix:synthesis}
Recall that $n(\epsilon)$ is the smallest length of gate sequences formed from gate set $\{S,H,T\}$ to approximate arbitrary single qubit unitary transformations within accuracy $\epsilon$, i.e., $n(\epsilon):=\min\left\{n\in\nn:\max_\Upsilon\min_x\frac{1}{2}\diamondnorm{\Upsilon-\Upsilon_x^{(n)}}<\epsilon\right\}$, where $\{\Upsilon_x^{(n)}\}_{x}$ is the set of unitary transformations representing the unitary circuit realized by the gate sequences of length at most $n$.
In this section, we perform a numerical calculation of $n(\epsilon)$ and show that the probabilistic implementation reduces the gate length owing to Theorem \ref{thm:unitary}.

First, we generate $\{\Upsilon_x^{(n)}\}_{x}$. Next, we randomly sample $2\times 10^4$ single-qubit unitary transformations with respect to the Haar measure on the unitary group on $\cdim{2}$. Third, for each randomly sampled unitary transformation $\Upsilon$, we calculate the half $\hat{\epsilon}(\Upsilon)$ of the minimum diamond norm between $\Upsilon$ and $\{\Upsilon_x^{(n)}\}_{x}$. Then, we compute the maximum value $\hat{\epsilon}$ of $\hat{\epsilon}(\Upsilon)$ over all the sampled $\Upsilon$. Since another numerical experiment indicates that the set of randomly sampled $2\times 10^4$ single-qubit unitary transformations is $0.1$-covering of that of single-qubit unitary transformations with high probability, we assume that gate sequences of length at most $n$ approximate arbitrary single-qubit unitary transformations within accuracy $\hat{\epsilon}$. (The true accuracy $\epsilon$ satisfies $\hat{\epsilon}\leq\epsilon<\hat{\epsilon}+0.1$.) 

In Fig.\ref{fig:synthesis}, we plot $n(\epsilon)$ and $n(\sqrt{\epsilon})$, which correspond to the minimum length of gate sequences to synthesize single-qubit unitary transformations within accuracy $\epsilon$ by using the deterministic implementation and the probabilistic one, respectively. This graph shows that the probabilistic implementation can reduce the circuit size in a wide range of accuracy $\epsilon$.

\begin{figure}[h]
\includegraphics[height=.22\textheight]{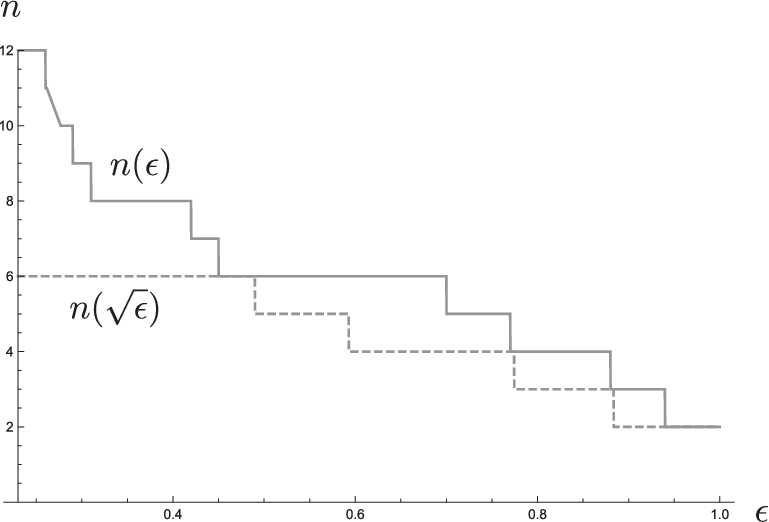}
\caption{\label{fig:synthesis} The minimum gate length at most $n$ to synthesize single qubit unitary transformations with gate set $\{S,H,T\}$ within accuracy $\epsilon$. The solid graph and the dashed graph correspond to the deterministic implementation and the probabilistic one, respectively. Note that in both cases, the accuracy is measured by using the half of the diamond norm, i.e., $\max_\Upsilon\min_x\frac{1}{2}\diamondnorm{\Upsilon-\Upsilon_x}$ and $\max_\Upsilon\min_p\frac{1}{2}\diamondnorm{\Upsilon-\sum_xp(x)\Upsilon_x}$, respectively.}
\end{figure}

\section{Equivalence between quantum testers and quantum networks}
\label{appendix:tester}
Recall that  the Choi-Jamio\l kowski operator of linear mapping $\Xi:\linop{\hh_1}\rightarrow\linop{\hh_2}$ is defined as $J(\Xi):=\sum_{i,j}\ket{i}\bra{j}\otimes\Xi(\ket{i}\bra{j})\in\linop{\hh_1\otimes\hh_2}$, and the set of quantum testers is defined as $ \mathbf{T}(\hh_1:\hh_2):=\{M\in\pos{\hh_1\otimes\hh_2}:\exists\rho\in\dop{\hh_1},M\leq\rho\otimes\idop\}$.
 In this section, we show that the set of mappings $f_M:\Xi\mapsto\tr{J\left(\Xi\right)M}$ associated to quantum testers $M\in \mathbf{T}(\hh_1:\hh_2)$ is equivalent to that of mappings $g_{\Phi,\Pi}:\Xi\mapsto\tr{\Xi\otimes id_{\hh_3}(\Phi)\Pi}$ associated to pure states $\Phi$ and hermitian projectors $\Pi$ for sufficiently large dimensional Hilbert space $\hh_3$. Note a proof for more general quantum testers is given in \cite[Theorem~10]{GGP09}.

First, we show that for any $\Phi\in\puredop{\hh_1\otimes\hh_3}$ and $\Pi\in\mathbf{Proj}(\hh_2\otimes\hh_3)$, there exists $M\in \mathbf{T}(\hh_1:\hh_2)$ such that $f_M=g_{\Phi,\Pi}$ as follows:
By letting $M=\ptr{3}{(\Phi^{T_1}\otimes\idop_2)(\idop_1\otimes\Pi)}$, we obtain
\begin{eqnarray}
g_{\Phi,\Pi}(\Xi)=\tr{\Xi\otimes id_{\hh_3}(\Phi)\Pi}&=&\tr{(J(\Xi)\otimes\idop_3)(\Phi^{T_1}\otimes\idop_2)(\idop_1\otimes\Pi)}=\tr{J(\Xi)M}=f_M(\Xi),
\end{eqnarray}
where $\Phi^{T_1}$ and $\ptr{3}{\cdot}$ represent the partial transpose of $\Phi$ and the partial trace, and the subscript of the operator denotes the system where the operator acts on. We can also verify that $M\in \mathbf{T}(\hh_1:\hh_2)$ as follows: Let $X=\sum_{ij}\alpha_{ij}\ket{j}_3\bra{i}_1$, where $\ket{\Phi}=\sum_{ij}\alpha_{ij}\ket{i}_1\ket{j}_3$ with the computational basis $\{\ket{i}_1\in\puredop{\hh_1}\}_i$ and $\{\ket{j}_3\in\puredop{\hh_3}\}_j$. Then, we obtain that for any postive semidefinte operator $P\in\pos{\hh_1\otimes\hh_2}$,
\begin{equation}
\tr{PM}=\tr{(P\otimes\idop_3)(\Phi^{T_1}\otimes\idop_2)(\idop_1\otimes\Pi)}=\tr{(X\otimes\idop_2)P(X\otimes\idop_2)^\dag\Pi}\geq0,
\end{equation}
which implies $M\geq0$. By letting $\rho=\ptr{3}{\Phi^{T_1}}=\ptr{3}{\Phi^T}(\in\dop{\hh_1})$, we can also verify that
\begin{eqnarray}
\rho\otimes\idop_2-M=\ptr{3}{(\Phi^{T_1}\otimes\idop_2)(\idop_{123}-\idop_1\otimes\Pi)}=\ptr{3}{(\Phi^{T_1}\otimes\idop_2)(\idop_1\otimes\Pi_\bot)}\geq0,
\end{eqnarray}
where $\Pi_\bot\in\mathbf{Proj}(\hh_2\otimes\hh_3)$ satisfies $\Pi+\Pi_\bot=\idop$, and the last inequality can be verified by the fact that $M\geq0$.

Next, we show that for any $M\in \mathbf{T}(\hh_1:\hh_2)$, there exist $\Phi\in\puredop{\hh_1\otimes\hh_3}$ and $\Pi\in\mathbf{Proj}(\hh_2\otimes\hh_3)$ such that $f_M=g_{\Phi,\Pi}$ as follows: Let $M\leq\rho_1\otimes\idop_2$, $\Phi\in\puredop{\hh_1\otimes\hh_{1'}}$ be a purification of $\rho_1^T$, its singular value decomposition be $\ket{\Phi}=\sum_i\sqrt{p(i)}\ket{x_i}_1\ket{y_i}_{1'}$ ($p(i)>0$) and $P\in\pos{\hh_2\otimes\hh_{1'}}$ be $P=XMX^\dag$, where $X=\sum_{i}\frac{1}{\sqrt{p(i)}}\ket{y_i}_{1'}\bra{x_i^*}_1$ and $\ket{\phi^*}$ is the complex conjugate of $\ket{\phi}$. Then we can verify that
\begin{equation}
f_M(\Xi)=\tr{J(\Xi)M}=\tr{(J(\Xi)\otimes\idop_{1'})(\Phi^{T_1}\otimes\idop_2)(\idop_1\otimes P)}=\tr{\Xi\otimes id_{\hh_{1'}}(\Phi)P}.
\end{equation}
Since $P\leq X(\rho_1\otimes\idop_2)X^\dag\leq\idop_{21'}$, $\{P,\idop-P\}$ is a positive operator-valued measure (POVM), which can be embedded in a larger Hilbert space $\hh_2\otimes\hh_3$ as a projection-valued measure (PVM) $\{\Pi,\Pi_\bot\}$ owing to the Naimark's extension. This completes the proof.

\end{document}